\tikzstyle{basenumber}=[fill=none, draw=none, shape=circle, minimum size=0, inner sep=0]
\tikzstyle{firstbucket}=[fill=none, draw=none, shape=circle, minimum size=0, inner sep=0, text={rgb:red,230;green,159;blue,0}]
\tikzstyle{secondbucket}=[fill=none, draw=none, shape=circle, minimum size=0, inner sep=0, text={rgb:red,86;green,180;blue,233}]
\tikzstyle{thirdbucket}=[fill=none, draw=none, shape=circle, minimum size=0, inner sep=0, text={rgb:red,0;green,158;blue,115}]
\tikzstyle{new style 0}=[fill=white, draw=none, shape=circle]
\tikzstyle{jump}=[->]
\newtheorem{observation}{Observation}
\newcommand{\BO}[1]{O\left(#1\right)}
\newcommand{\BT}[1]{\Theta\left(#1\right)}
\renewcommand{\epsilon}{\varepsilon}
\newcommand{\algo}[1]{\textsc{#1}}
\newcommand{\ours}{\algo{RD-index}\xspace}
\newcommand{\qrd}{range-duration\xspace}
\newcommand{\qdo}{duration-only\xspace}
\newcommand{\qro}{range-only\xspace}
\newcommand{\Let}[2]{#1 $\leftarrow$ #2}
\newcommand{\domT}{\ensuremath{\Omega^T}}
\newcommand{\args}[2]{#1_1,\dots,#1_{#2}}
\newcommand{\R}{\ensuremath{\mathbf{r}}\xspace} % temporal relation
\newcommand{\T}{\ensuremath{t}} % temporal range
\newcommand{\D}{\ensuremath{d}} % duration range
\newcommand{\TS}{\ensuremath{t_s}}
\newcommand{\TE}{\ensuremath{t_e}}
\newcommand{\rTE}{\ensuremath{r.T_e}}
\newcommand{\rTS}{\ensuremath{r.T_s}}
\newcommand{\ts}{\ensuremath{T_s}}
\newcommand{\te}{\ensuremath{T_e}}
\newcommand{\dmin}{\ensuremath{d_{min}}}
\newcommand{\dmax}{\ensuremath{d_{max}}}
\newcommand{\rT}{\ensuremath{r.T}}
\newcommand{\dataset}[1]{\texttt{#1}}
\newcommand{\argmax}{\operatorname{argmax}}
\newcommand{\Column}{\texttt{column}\xspace}
\newcommand{\Colbounds}{\texttt{col\_minstart}\xspace}
\newcommand{\Colboundsi}{\texttt{col\_minstart[i]}\xspace}
\newcommand{\Cell}{\texttt{cell}\xspace}
\newcommand{\Cellbounds}{\texttt{cell\_mindur}\xspace}
\newcommand{\Cellboundsi}{\texttt{cell\_mindur[i]}\xspace}
\newcommand{\Cellboundsij}{\texttt{cell\_mindur[i][j]}\xspace}
\newcommand{\Grid}{\texttt{grid}\xspace}
\newcommand{\Gridij}{\texttt{grid[i][j]}\xspace}
\newcommand{\LatestEnds}{\texttt{col\_maxend}\xspace}
\newcommand{\LatestEndsi}{\texttt{col\_maxend[i]}\xspace}
\newcommand{\LatestEndsPrev}{\texttt{col\_maxend[i-1]}\xspace}
\newcommand{\MaxDurations}{\texttt{cell\_maxdur}\xspace}
\newcommand{\MaxDurationsi}{\texttt{cell\_maxdur[i]}\xspace}
\newcommand{\MaxDurationsij}{\texttt{cell\_maxdur[i][j]}\xspace}
\newcommand{\nextsubseq}{\textsc{NextSubseq}\xspace}
\newcommand{\buildindex}{\textsc{BuildIndex}\xspace}
\newcommand{\query}{\textsc{Query}\xspace}
\newcommand{\btree}{\algo{B-Tree}\xspace}
\newcommand{\rtree}{\algo{R*-Tree}\xspace}
\newcommand{\itree}{\algo{Interval-Tree}\xspace}
\newcommand{\gfile}{\algo{Grid-File}\xspace}
\newcommand{\pindex}{\algo{Period-Index$\star$}\xspace}
\newcommand{\rdtd}{\algo{RD-index-td}\xspace}
\newcommand{\rddt}{\algo{RD-index-dt}\xspace}
\newcommand{\rd}{\algo{RD-index}\xspace}
\definecolor{firstcolor}{HTML}{FDE725}
\definecolor{secondcolor}{HTML}{7AD151}
\definecolor{thirdcolor}{HTML}{22A884}
\definecolor{fourthcolor}{HTML}{2A788E}
\definecolor{fifthcolor}{HTML}{414487}
\definecolor{sixthcolor}{HTML}{440154}
\begin{document}

\title{Indexing Temporal Relations for Range-Duration Queries}

\author{Matteo Ceccarello}
\affiliation{
 Faculty of Computer Science\\ Free University of Bozen-Bolzano, Italy
}
\email{mceccarello@unibz.it}

\author{Anton Dign\"os}
\affiliation{
 Faculty of Computer Science\\ Free University of Bozen-Bolzano, Italy
}
\email{dignoes@inf.unibz.it}

\author{Johann Gamper}
\affiliation{
 Faculty of Computer Science\\ Free University of Bozen-Bolzano, Italy
}
\email{gamper@inf.unibz.it}

\author{Christina Khnaisser}
\affiliation{
 Facult\'e de m\'edecine et des sciences de la sant\'e\\ Universit\'e de Sherbrooke, Canada
}
\email{christina.khnaisser@usherbrooke.ca}

\begin{abstract}
  Temporal information plays a crucial role in many database
  applications, however support for queries on such data is limited.
  We present an index structure, termed \rd, to support
  \emph{range-duration queries} over interval timestamped relations,
  which constrain both the \emph{range} of the tuples' positions on
  the timeline and their \emph{duration}.  \rd is a grid structure in
  the two-dimensional space, representing the position on the timeline
  and the duration of timestamps, respectively.  Instead of using a
  regular grid, we consider the data distribution for the construction
  of the grid in order to ensure that each grid cell contains
  approximately the same number of intervals.  \rd features provable
  bounds on the running time of all the operations, allow for a simple
  implementation, and supports very predictable query performance.
  We benchmark our solution on a variety of datasets and query
  workloads, investigating both the query rate and the behavior of the
  individual queries.  The results show that \rd performs better than
  the baselines on range-duration queries, for which it is explicitly
  designed.  Furthermore, it outperforms specialized indexes also on
  workloads containing queries constraining either only the duration
  or the range.
\end{abstract}

\maketitle

\section{Introduction}

Temporal information plays a crucial role in many database
applications: in fact, many database management systems and the SQL
standard~\cite{DBLP:journals/sigmod/KulkarniM12} provide automated version control of the data and time travel
facilities, allowing to efficiently access past history.  Past
research mainly concentrated on efficient solutions for important
temporal operators, such as temporal
aggregation~\cite{DBLP:conf/icde/KlineS95,DBLP:journals/tkde/MoonLI03,DBLP:conf/edbt/BohlenGJ06,DBLP:conf/ssd/PiatovH17},
temporal joins~\cite{DBLP:conf/icde/PiatovHD16,DBLP:journals/vldb/BourosMTT21,DBLP:journals/vldb/DignosBGJM21}, and time
travel~\cite{DBLP:conf/sigmod/KaufmannMVFKFM13} queries.
All these approaches consider only the position of intervals along the
timeline, ignoring another important aspect, namely the
\emph{duration} of intervals.
As a result, index structures to support more general selection
queries that constrain both the duration and the position in time of
intervals have been
missing~\cite{DBLP:journals/sigmod/KulkarniM12,DBLP:journals/tods/DignosBGJ16,DBLP:conf/ebiss/BohlenDGJ17}
until recently~\cite{BehrendDGSVRK_SSTD19_period-index}.
In many application domains, however, both aspects of temporal
information are useful to formulate queries.

\begin{example}\label{ex:motivation}
  As a concrete use case, consider the use of antibiotics in
  healthcare. Antibiotic resistance is a world challenge, and the
  emergence of new resistance factors is very difficult to monitor and
  to predict due to the diversity of antibiotic usage and events
  (e.g., environment, species evolution, medical practices,
  etc.)~\cite{Hayashi2011,Larsson2021}.  Selecting the most
  appropriate antibiotic and the appropriate treatment duration is an
  essential step to reduce antibiotic
  resistance~\cite{Hayashi2011}. Thus, defining guidelines for the
  duration of antibiotic treatments, measuring the adherence to these
  guidelines, and developing stewardship tools regarding antibiotics
  usage can help clinicians in choosing the optimal treatment
  considering the patient's medical
  history~\cite{Shapiro2021,Larsson2021}.  Such measures should be
  implemented at a national level in order to monitor and audit
  antibiotic resistance on a larger scale. In this context, the
  following types of temporal queries are frequent:
  \begin{itemize}
  \item[$Q1$:] \emph{``Find all antibiotics prescriptions from October
      1, 2016 to March 31, 2017.''}
  \item[$Q2$:] \emph{``Find all antibiotics prescriptions with a
      treatment duration between 5 and 8 days.''}
  \item[$Q3$:] \emph{``Find all antibiotics prescriptions from October
      1, 2016 to March 31, 2017, with a treatment duration between 1
      and 2 weeks.''}
  \end{itemize}
  % 
  % For instance, applications that offer situation
  % awareness~\cite{DBLP:conf/iri/BehrendSGLF15} require fast access
  % both to the duration and the position of events.
  % 
  The first query $Q1$ retrieves tuples based on the position of the
  events on the timeline; we call it \emph{range query}.  In contrast,
  query $Q2$ imposes constraints on the duration of matching events,
  and we call it \emph{duration query}.  Finally, query $Q3$
  constrains both types of information; we call it
  \emph{range-duration query}.
  This type of queries can be found and have been reported as a
  primitive operation in other application scenarios that deal with
  interval data, e.g., in air traffic
  analysis~\cite{DBLP:conf/gis/SchullerBM10,DBLP:conf/adbis/BehrendMSW09,DBLP:conf/adbis/SchullerSB12},
  event detection for video
  surveillance~\cite{DBLP:conf/cikm/PersiaBH17}, or the analysis of
  clinical data~\cite{DBLP:conf/adbis/BehrendSXFGLCG14}.
\end{example}

Existing index structures typically support only one of the two
aspects, either the position of the interval on the timeline or the
duration of the interval.  For instance, the well-known relational
interval tree~\cite{DBLP:conf/vldb/KriegelPS00} is optimized for
efficiently determining temporal relationships between intervals but
not interval lengths. In the worst case, the entire index tree must be
traversed if a query solely contains restrictions on the interval
length.  Alternatively, the duration of the intervals can be indexed
straightforwardly using a classic data structure such as a B-tree.  In
this case, however, queries constraining only the range of the
intervals will need to traverse the entire tree.
Combining the two indexes is typically inefficient since a query can
have different selectivities in the two dimensions.
Therefore, to efficiently support workloads involving a mix of all
three types of queries mentioned above we seek a new index structure
that supports both dimensions at the same time.

In this paper, we introduce \rd, a novel two-dimensional data
structure that indexes time intervals both on their position on the
timeline and their duration.  Our index structure partitions the
intervals in a grid according to their start times and
durations. Rather than constructing a regular grid, the boundaries
between the grid cells are determined by taking into account the data
distribution.  Such a strategy ensures that each cell contains
approximately the same number of intervals, with the exception of some
edge cases if the distribution of the intervals is extremely skewed.
The uniform distribution of the data over all grid cells allows to
obtain very predictable query times, which are proportional to the
selectivity of the query.  We prove that the time for answering a
range-duration query with \rd is
$O(\frac{n}{s^2} \log \frac{n}{s} + \frac{n}{s} + s^2 + k)$, where $n$
is the size of the input relation, $k$ is the number of intervals
matching the query predicate, and $s$ is the page size.  The index can
be constructed in $O(n \log n)$ time.  The page size $s$ is the only
parameter of our index structure, and it is independent of the data
distribution. The index structure also lends itself to a rather simple
implementation.  While being explicitly designed to address
range-duration queries, \rd also supports \qro and \qdo queries
efficiently.

We present the results of a detailed experimental evaluation. The
results show that the overhead introduced by the data structure is
indeed negligible and that the running time in practice is largely
proportional to the selectivity of the query.  This is in contrast to
the competitors we compare to.  On \qrd queries we find that \rd
clearly outperforms the competitors. On mixed workloads comprising all
three types of queries, we find that \rd outperforms the competitors
in the vast majority of the workloads, even for cases for which
specialized solutions exist. 

Our contributions can be summarized as follows:
\begin{itemize}%[leftmargin=1em]
\item We describe \rd, a novel index structure that supports temporal
  queries involving both the duration and the range of time intervals.
\item We prove bounds on the performance of $\rd$, which can be tuned
  with a single page size parameter~$s$.
\item We provide an extensible open source implementation, which we
  benchmark against state-of-the-art competitors, showing
  significantly better performance across several workloads.
  In particular, we show how \rd can efficiently handle mixed workloads where all three types of queries coexist.
\end{itemize}

The rest of the paper is organized as follows.  In
Section~\ref{sec:related-work} we review the state of the art, before
laying out the fundamental concepts underlying our approach in
Section~\ref{sec:preliminaries}.  Our data structure is introduced in
Section~\ref{sec:rd-index-structure}, and the complexity of all
operations is analyzed in Section~\ref{sec:analysis}.  Experimental
results are presented in Section~\ref{sec:experiments}, before drawing
our conclusions in Section~\ref{sec:conclusions}.

\section{Related work}
\label{sec:related-work}

The type of selection queries we are studying in this paper are
queries with a conjunctive predicate, where one predicate in the
conjunction restricts the position of intervals on the time line and
the second the duration of intervals. In this section, we review
indexing structures that are (partially) suitable for such selection
queries, and also review structures similar to our approach that are
used for interval joins.

There are several approaches devoted to indexing interval timestamped data.
Edelsbrunner's Interval Tree~\cite{edelsbrunner80} is one of the most
popular indexing structure for intervals. It is asymptotically optimal
for selection queries involving the overlap predicate, and there
exists an implementation using standard relational database technology
based on B-tree indexes~\cite{DBLP:conf/vldb/KriegelPS00}.  A
shortcoming of the interval tree is that, in contrast to our indexing
structure, it does not provide a mechanism to restrict the duration of
intervals, and thus can only solve one part of a range-duration query.
A similar indexing structure is the segment tree~\cite{deBerg00}.  It
builds disjoint segments over intervals at the leaf level using all
start and end points in a relation, and recursively merges segments in
intermediate nodes of the tree. This data structure was originally
designed for point queries over intervals (also known as time travel queries), i.e., for
retrieving all intervals overlapping a given time point. The segment
tree also supports selection queries with the overlap predicate given
a query interval, albeit in this case a duplicate elimination step for
intervals retrieved multiple times is required. Another index structure that support time travel queries is the timeline index~\cite{DBLP:conf/sigmod/KaufmannMVFKFM13,DBLP:journals/pvldb/Kaufmann13}. The timeline index stores the start and end points of intervals in an event list in sorted order and allows to retrieve all tuples that overlap a given time point by scanning through the event list and discarding tuples that ended before the given time point. To avoid scanning through the entire event list, the index maintains regular checkpoints that store all tuples that overlap the time point of the checkpoint. 
Similarly to the
interval tree, the segment tree and timeline index do not support to restrict the duration of intervals. 

In a two dimensional space, intervals can be represented as 2D points,
where one dimension is the start point and the second dimension is
either the end point or the duration of an interval. In such a space,
a selection query with the overlap predicate corresponds to a
selection query over a 2D area. For this, multidimensional indices can
be used.
R-trees~\cite{DBLP:conf/sigmod/BeckmannKSS90,DBLP:journals/talg/ArgeBHY08}
are multidimensional indices that group objects in a multidimensional
space using minimum bounding (hyper) rectangles. Quadtrees and
octrees~\cite{DBLP:journals/acta/FinkelB74,Ulrich00,Samet05}
recursively divide the space into partitions and place objects into
the best fitting partition according to some criteria.  All of the
aforementioned indexes are linked data structures, which suffer from
poor locality, both when implemented in-memory and on disk.  In
contrast, our \ours can be implemented by means of simple arrays, and
thus enjoys high cache locality.

Another multidimensional index structure is the grid
file~\cite{DBLP:journals/tods/NievergeltHS84}. In the context of time
intervals, the idea would be to partition the span of durations and of
starting times into cells of equal width, thus allowing efficient
access to both dimensions. The main drawback of this data structure is
that in case of skewed data distributions the load of the cells is
unequal, which might significantly harm the performance.

A recent approach to multidimensional indexing is that of
\emph{learned indexes}~\cite{DBLP:conf/sigmod/KraskaBCDP18}: the
proposition is that index structures are \emph{models} mapping keys to
records. Therefore machine learning models can be used to provide this
mapping, in lieu of the classic data structures. In particular, Flood
and
Tsunami~\cite{DBLP:conf/sigmod/NathanDAK20,DBLP:journals/corr/abs-2006-13282}
use Recursive Model Index~\cite{DBLP:conf/sigmod/KraskaBCDP18} and a
variant of decision trees to model the position of records in the
database, adapting to the distribution of the data and of the query
workload.
Our approach shares some ideas with this line of work, namely adapting
to the data distribution by means of the conditional cumulative
distribution function.  However, while our index supports both
insertions and deletions, both Flood and Tsunami are tailored at
read-only workloads.  Furthermore, we prove bounds on the worst case
running times for all the operations,
while~\cite{DBLP:conf/sigmod/NathanDAK20,DBLP:journals/corr/abs-2006-13282}
provide an empirical evaluation.  Finally, our approach is arguably
simpler, in that it is based just on sorting and iterating through
records.

Very recently,
Behrend et al.~\cite{BehrendDGSVRK_SSTD19_period-index} proposed an
index, named \pindex, that explicitly supports range-duration
queries. The index partitions the time domain in \emph{buckets}.  An
interval is assigned to all buckets it intersects with.  Within each
bucket, intervals are further partitioned in \emph{levels} according
to their duration, with the minimum duration indexed within each level
decreasing geometrically. To support efficient indexing along the
start time dimension, each level is further partitioned in the time
domain. This data structure is adaptive to the distribution of start
times, while it assumes a Zipf-like distribution for the duration of
the intervals.  Our index structure removes this assumption, thus
supporting datasets with arbitrary distributions of the tuples'
duration.  Furthermore, our index features only one data-independent
parameter, instead of the two data-dependent parameters of \pindex,
and it allows to control whether to index first by duration or time.
Moreover, we do not replicate intervals in the index, yielding a
significantly smaller structure, thereby avoiding the consequent
possible performance degradation.
Finally, \pindex does not support updates of the index.

In recent years, algorithms for interval joins, which can be seen as a sequence of range queries, have been actively
studied. Approaches based on the timeline
index~\cite{DBLP:conf/sigmod/KaufmannMVFKFM13} process sets of
intervals as sorted event lists of their start and end points.  The
interval join is computed by scanning these event lists in an
interleaved fashion, thereby keeping and joining sets of active
intervals, i.e., intervals whose start has been encountered but not
the end point.  To improve the performance of the original linked list
data structure for storing active intervals, a gapless hash map has
been proposed in~\cite{DBLP:conf/icde/PiatovHD16} that provides a
higher performance for scanning active intervals. The same idea has
also been extended for joins using general Allen's
predicates~\cite{DBLP:journals/vldb/PiaDHP21} rather than only overlap
predicates.  The works
in~\cite{DBLP:journals/pvldb/BourosM17,DBLP:journals/vldb/BourosMTT21,DBLP:journals/vldb/DignosBGJM21}
compute an interval join using sorting and backtracking. First, the
input relations are sorted by start time and then an interleaving
merge-join is performed to compute the temporal join between the two
relations. Other
approaches~\cite{DBLP:conf/sigmod/DignosBG14,DBLP:journals/vldb/CafagnaB17}
for the interval join are based on partitioning intervals according to
their position and then produce the join result by joining relevant
partitions.  While all these approaches for the interval join provide
mechanisms to join overlapping intervals, in contrast to our work they
are not applicable for general selection queries as they always
require to read the entire relations.  Moreover, the duration of
intervals is not considered at all in these works.

\section{Preliminaries}
\label{sec:preliminaries}

We assume a linearly ordered, discrete time domain, $\domT$.  A time
interval is a set of contiguous time points, and $\T = [\TS,\TE)$
denotes the closed-open interval of points from $\TS$ to $\TE$.  We
use $|t| = \TE - \TS$ to denote the duration of time interval $\T$ and
$\T \cap \T'$ to denote the set of time points shared by two intervals
$\T$ and $\T'$, which, if not empty, is itself an interval.
The schema of a temporal relation is given by $R = (\args{A}{m},T)$,
where $\args{A}{m}$ are the non-temporal attributes with domains
$\Omega_i$ and $T$ is the time interval attribute with domain
$\domT \times \domT$ representing, for instance, the tuple's valid
time.  A temporal relation $\R$ with schema $R$ is a finite set of
tuples, where each tuple has a value in the appropriate domain for
each attribute in the schema. We use $r.A_i$ to denote the value of
attribute $A_i$ in tuple $r$, and $\rT = [\rTS, \rTE)$ to refer to its
time interval. 

The index we propose efficiently supports the three following types of
temporal queries (defined as
in~\cite{BehrendDGSVRK_SSTD19_period-index}).

\begin{definition}[Range query]
  Given a temporal interval $\T = [\TS, \TE)$ and a temporal relation
  $\R$, a \emph{range query} is defined as
  \[
    Q(\R, \T) = \left\{r \in \R : \rT \cap \T \ne \emptyset
    \right\}
  \]
\end{definition}

\begin{definition}[Duration query]
  Given a duration interval $\D = [\dmin, \dmax]$ and a temporal
  relation $\R$, a \emph{duration query} is defined as
  \[
    Q(\R, \D) = \left\{r \in \R : |\rT| \in [\dmin, \dmax] \right\}
  \]
\end{definition}

\begin{definition}[Range-duration query]
  Given a temporal interval $\T = [\TS, \TE)$, a duration interval
  $\D = [\dmin, \dmax]$, and a temporal relation $\R$, a
  \emph{range-duration query} is defined as
  \[
    Q(\R, \T, \D) = \left\{r \in \R : \rT \cap \T \ne \emptyset \wedge
      |\rT| \in [\dmin, \dmax] \right\}
  \]
\end{definition}

A range query retrieves all tuples whose time interval intersects with the
query range $\T$.  A duration query retrieves all tuples whose
time interval has a duration that is between $d_{min}$ and $d_{max}$. A
range-duration query is a combination of the former two.

\begin{example}\label{example:queries}
  As a running example, we consider real-world drug prescriptions from
  the MIMICIII open source database~\cite{mimiciii} (cf.\ use case in
  Example~\ref{ex:motivation}).  It stores antibiotic prescriptions,
  characterized by a start date and an end date of the prescription
  and the duration of the treatment.  An excerpt of four tuples is
  shown in Figure~\ref{tab:example}.
  Consider the following range-duration query: Retrieve all
  prescriptions in the period from June 15 to July 15 with a treatment
  duration between 5 and 15 days.  Figure~\ref{fig:example-intervals}
  shows a graphical representation of the relation and the query,
  where the time intervals are drawn by thick solid horizontal lines.
  The red area indicates the constraint on the position.  The red
  segment below of each timestamp interval denotes the duration
  constraint, where the dotted red line indicates the range of minimum
  and maximum duration. Hence, an interval satisfies the range
  constraint if it intersects with the red area, and it satisfies the
  duration constraint if its end point is within the dotted red line.
  Tuple $r_1$ satisfies only the duration constraint, tuple $r_2$
  satisfies neither constraint, tuple $r_3$ satisfies only the range
  constraint, and tuple $r_4$ satisfies both.
\end{example}

\begin{figure}% [t]
  \begin{subfigure}{\columnwidth}
    \centering
    % \resizebox{\columnwidth}{!}{%
    \begin{tabular}[t]{r|lrr|r}
      % \toprule
      \cline{2-4}
     & drug &  $\ts$ &  $\te$ & {\color{gray} duration}\\
      % \midrule
      \cline{2-4}
    $r_1$ & Amoxicillin & June 08 & June 14 & {\color{gray} (6 days)}\\
    $r_2$ & Amoxicillin & June 10 & June 12 & {\color{gray} (2 days)}\\
    $r_3$ & Ceftriaxone & June 20 & July 05 & {\color{gray} (15 days)}\\
    $r_4$ & Levofloxacin & June 24 & July 04 & {\color{gray} (10 days)}\\
      % \bottomrule
      \cline{2-4}
    \end{tabular}

    % }
    \caption{Sample of relation with antibiotic prescriptions\label{tab:example}}
  \end{subfigure}

  \medskip
  \begin{subfigure}{\columnwidth}
    \centering
    \begin{tikzpicture}[x=1pt,y=1pt]
    \definecolor{fillColor}{RGB}{255,255,255}
    \path[use as bounding box,fill=fillColor,fill opacity=0.00] (0,0) rectangle (238.49, 72.27);
    \begin{scope}
    \path[clip] (  3.00, 13.88) rectangle (238.49, 72.27);
    \definecolor{drawColor}{gray}{0.92}

    \path[draw=drawColor,line width= 0.3pt,line join=round] ( 23.22, 13.88) --
      ( 23.22, 72.27);

    \path[draw=drawColor,line width= 0.3pt,line join=round] ( 99.34, 13.88) --
      ( 99.34, 72.27);

    \path[draw=drawColor,line width= 0.3pt,line join=round] (170.70, 13.88) --
      (170.70, 72.27);

    \path[draw=drawColor,line width= 0.6pt,line join=round] ( 61.28, 13.88) --
      ( 61.28, 72.27);

    \path[draw=drawColor,line width= 0.6pt,line join=round] (137.40, 13.88) --
      (137.40, 72.27);

    \path[draw=drawColor,line width= 0.6pt,line join=round] (204.00, 13.88) --
      (204.00, 72.27);
    \definecolor{fillColor}{RGB}{255,0,0}

    \path[fill=fillColor,fill opacity=0.02] ( 61.28, 16.53) rectangle (204.00, 69.62);

    \path[fill=fillColor,fill opacity=0.02] ( 61.28, 16.53) rectangle (204.00, 69.62);

    \path[fill=fillColor,fill opacity=0.02] ( 61.28, 16.53) rectangle (204.00, 69.62);

    \path[fill=fillColor,fill opacity=0.02] ( 61.28, 16.53) rectangle (204.00, 69.62);
    \definecolor{drawColor}{RGB}{0,0,0}

    \path[draw=drawColor,line width= 1.7pt,line join=round] ( 37.49, 35.95) -- ( 47.01, 35.95);

    \path[draw=drawColor,line width= 1.7pt,line join=round] ( 85.07, 48.90) -- (156.43, 48.90);
    \definecolor{drawColor}{RGB}{0,0,255}

    \path[draw=drawColor,line width= 1.7pt,line join=round] (104.09, 61.85) -- (151.67, 61.85);
    \definecolor{drawColor}{RGB}{0,0,0}

    \path[draw=drawColor,line width= 1.7pt,line join=round] ( 27.98, 23.01) -- ( 56.52, 23.01);

    \node[text=drawColor,anchor=base,inner sep=0pt, outer sep=0pt, scale=  0.85] at ( 39.03, 39.84) {$r_2$};

    \node[text=drawColor,anchor=base,inner sep=0pt, outer sep=0pt, scale=  0.85] at ( 86.60, 52.78) {$r_3$};

    \node[text=drawColor,anchor=base,inner sep=0pt, outer sep=0pt, scale=  0.85] at (105.63, 65.73) {$r_4$};

    \node[text=drawColor,anchor=base,inner sep=0pt, outer sep=0pt, scale=  0.85] at ( 29.51, 26.89) {$r_1$};
    \definecolor{drawColor}{RGB}{255,0,0}
    \definecolor{fillColor}{RGB}{255,0,0}

    \path[draw=drawColor,draw opacity=0.70,line width= 0.4pt,line join=round,line cap=round,fill=fillColor,fill opacity=0.70] ( 37.49, 33.36) circle (  1.43);

    \path[draw=drawColor,draw opacity=0.70,line width= 0.4pt,line join=round,line cap=round,fill=fillColor,fill opacity=0.70] ( 85.07, 46.31) circle (  1.43);

    \path[draw=drawColor,draw opacity=0.70,line width= 0.4pt,line join=round,line cap=round,fill=fillColor,fill opacity=0.70] (104.09, 59.26) circle (  1.43);

    \path[draw=drawColor,draw opacity=0.70,line width= 0.4pt,line join=round,line cap=round,fill=fillColor,fill opacity=0.70] ( 27.98, 20.42) circle (  1.43);

    \path[draw=drawColor,draw opacity=0.70,line width= 0.7pt,line join=round] ( 37.49, 33.36) -- ( 61.28, 33.36);

    \path[draw=drawColor,draw opacity=0.70,line width= 0.7pt,line join=round] ( 85.07, 46.31) -- (108.85, 46.31);

    \path[draw=drawColor,draw opacity=0.70,line width= 0.7pt,line join=round] (104.09, 59.26) -- (127.88, 59.26);

    \path[draw=drawColor,draw opacity=0.70,line width= 0.7pt,line join=round] ( 27.98, 20.42) -- ( 51.76, 20.42);

    \path[draw=drawColor,draw opacity=0.70,line width= 0.9pt,dash pattern=on 1pt off 3pt ,line join=round] ( 61.28, 33.36) -- ( 99.34, 33.36);

    \path[draw=drawColor,draw opacity=0.70,line width= 0.9pt,dash pattern=on 1pt off 3pt ,line join=round] (108.85, 46.31) -- (146.91, 46.31);

    \path[draw=drawColor,draw opacity=0.70,line width= 0.9pt,dash pattern=on 1pt off 3pt ,line join=round] (127.88, 59.26) -- (165.94, 59.26);

    \path[draw=drawColor,draw opacity=0.70,line width= 0.9pt,dash pattern=on 1pt off 3pt ,line join=round] ( 51.76, 20.42) -- ( 89.82, 20.42);
    \end{scope}
    \begin{scope}
    \path[clip] (  0.00,  0.00) rectangle (238.49, 72.27);
    \definecolor{drawColor}{RGB}{169,169,169}

    \path[draw=drawColor,line width= 0.6pt,line join=round] (  3.00, 13.88) --
      (238.49, 13.88);
    \end{scope}
    \begin{scope}
    \path[clip] (  0.00,  0.00) rectangle (238.49, 72.27);
    \definecolor{drawColor}{gray}{0.30}

    \node[text=drawColor,anchor=base,inner sep=0pt, outer sep=0pt, scale=  0.96] at ( 61.28,  1.87) {Jun 15};

    \node[text=drawColor,anchor=base,inner sep=0pt, outer sep=0pt, scale=  0.96] at (137.40,  1.87) {Jul 01};

    \node[text=drawColor,anchor=base,inner sep=0pt, outer sep=0pt, scale=  0.96] at (204.00,  1.87) {Jul 15};
    \end{scope}
    \end{tikzpicture}

    \caption{Range-duration query\label{fig:example-intervals}}
  \end{subfigure}
  \caption{Running example.}
\end{figure}
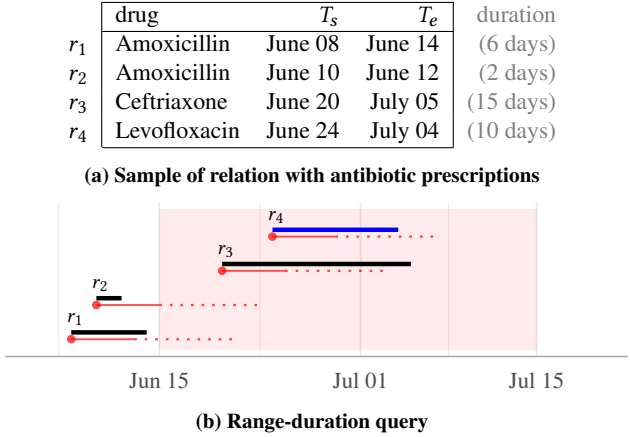

\section{The \ours Structure}
\label{sec:rd-index-structure}

\subsection{Overview}
\label{sec:overview}

\begin{figure*}[t]
  \includegraphics[width=\textwidth]{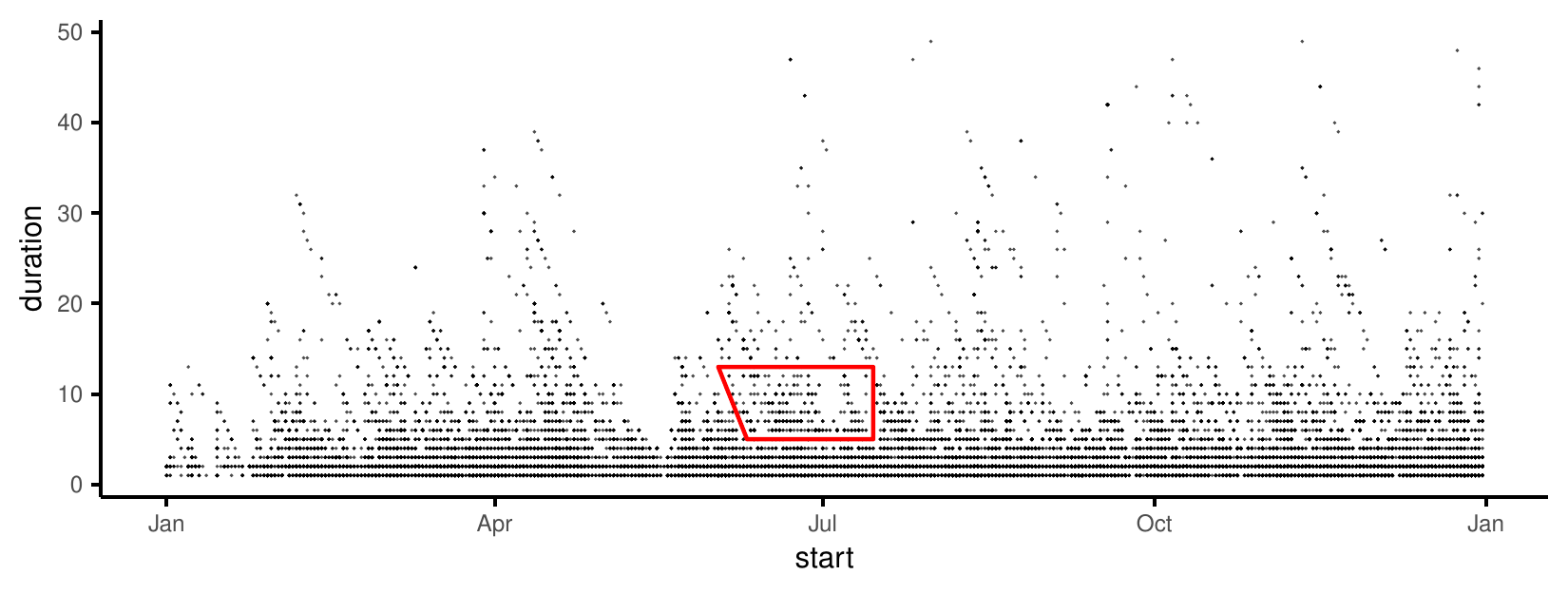}
  \caption{Drug prescription dataset in the start time/duration plane. 
	The gray dots are tuples.
    The red polygon is the range-duration query of
    Example~\ref{example:queries}.
    \label{fig:example-plane}
  }
\end{figure*}

An interval can be completely described by its starting
time and duration, alternatively, to its starting and ending time.
Therefore, a set of temporal intervals can be represented as a
set of points in a two dimensional space, with one coordinate being
the start time and the other the duration.
A range-duration query $Q(\R, \T, \D)$ with $\T = [\TS, \TE)$ and
$\D = [\dmin, \dmax]$ in such a space is represented by a polygon
containing all tuples $r \in \R$ such that

\begin{align}
  & |\rT| \in [\dmin, \dmax] \label{eq:duration-condition} \\
  % & \TS < \rTE                  \label{eq:start-condition}     \\
  & \rTE > \TS                  \label{eq:start-condition}     \\
  & \rTS < \TE                  \label{eq:end-condition}
\end{align}

\begin{example}
  Figure~\ref{fig:example-plane} shows a sample of 31\,192 drug
  prescriptions from the MIMICIII dataset in the two dimensional
  space, where each point represents a timestamp interval of a
  tuple. The red polygon indicates the query of
  Example~\ref{example:queries}, i.e., the prescriptions from June 15
  to July 15 with a duration between 5 and 15 days.  Note that the
  points are not uniformly distributed, hence efficiently indexing
  this dataset requires to be adaptive in both the start time and
  duration dimension.
\end{example}

The \ours we are presenting is a two-dimensional grid, partitioning
the tuples into disjoint buckets according to the start time and the
duration of the intervals.  The boundaries between cells are defined
by using the empirical cumulative distribution function of the tuples'
duration and the starting times, so that each cell contains
approximately the same number of intervals, which corresponds to the
\emph{page size} $s$ of our index.  This allows the index to adapt to
the distribution of the input and to different scenarios.  In a main
memory scenario, the parameter $s$ could be set such that a cell fits
in a cache line.  In an external memory setting, it might be set to
the disk block size.

To simplify the presentation, in the following we will focus on the
timestamp interval attribute $T$ of relation $\R$: in the discussion
we assume that each interval being inserted in the index is associated
with a reference to the tuple in $\R$ it belongs to.

\subsection{Index Construction}
\label{sec:index-construction}

The grid structure of the \ours partitions an array of tuples first
along either the start time or duration dimension of the intervals,
and then along the other.  The choice of which dimension to index
first may impact the performance of the index, depending on the query
workload and the data distribution (cf.\
Section~\ref{sec:experiments}.  In the following we assume that the
start time dimension is partitioned first, followed by the duration
dimension.  All the descriptions, considerations, and proofs also hold
with the dimensions swapped.

Before describing the algorithm to build the index, we present the
subroutine \nextsubseq, which partitions an array of tuples that is
sorted according to a given key function.  We will use this subroutine
to determine columns and cells of the grid structure, using first the
start time and then duration as keys. Given an index $h$ and a size
parameter $b$, \nextsubseq returns a subsequence starting at $h$ that
either contains at most $b$ tuples, or contains tuples that all share
the same key. Additionally, all the tuples with the same key are part
of the same subsequence. The pseudocode is reported in
Algorithm~\ref{alg:next-breakpoint} and works as follows. Starting
from index position $h$, if there are fewer than $b$ elements after
$h$ then we return all the tuples from $h$ onwards.  Otherwise, we
look at the tuple at position $h' = h + b$ and consider two
cases\footnote{Therefore $h'$ is the end
  index of the subsequence, \emph{non-inclusive}.}. If the tuples at position $h$ and $h'$
have the same key, then we scan forward until the first tuple with a
different key occurs
(lines~\ref{line:heavy-bucket}--\ref{line:heavy-bucket-end}).
Otherwise, we scan backward until two consecutive tuples have
different keys
(lines~\ref{line:split-bucket}--\ref{line:split-bucket-end}).  In both
cases, the rationale is to avoid splitting runs of same-key tuples
between different subsequences.

Note that Algorithm~\ref{alg:next-breakpoint} might return a
subsequence with more than $b$ elements if and only if all share the
same key.  In such case we deem the returned subsequence \emph{heavy}, otherwise we deem it \emph{light}.

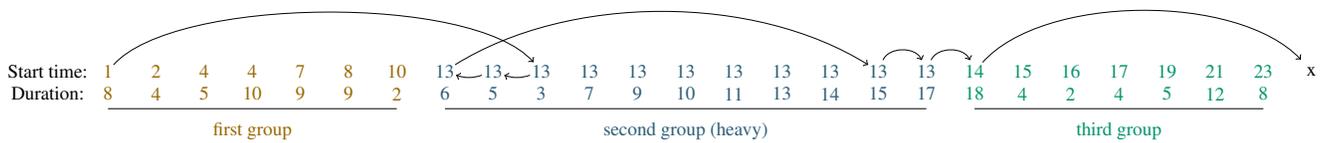
\begin{figure*}[t]
  \scalebox{.8}{
    \tikzstyle{every picture}=[baseline=-0.25em,scale=0.5,xscale=2]
    \begin{tikzpicture}[xscale=.8]
      \begin{pgfonlayer}{nodelayer}
        \node [style=firstbucket] (2) at (0, 0) {1};
        \node [style=firstbucket] (4) at (1, 0) {2};
        \node [style=firstbucket] (5) at (2, 0) {4};
        \node [style=firstbucket] (6) at (3, 0) {4};
        \node [style=firstbucket] (7) at (4, 0) {7};
        \node [style=firstbucket] (8) at (5, 0) {8};
        \node [style=firstbucket] (9) at (6, 0) {10};
        \node [style=secondbucket] (10) at (9, 0) {13};
        \node [style=secondbucket] (11) at (8, 0) {13};
        \node [style=secondbucket] (12) at (7, 0) {13};
        \node [style=secondbucket] (13) at (10, 0) {13};
        \node [style=secondbucket] (14) at (11, 0) {13};
        \node [style=secondbucket] (15) at (12, 0) {13};
        \node [style=secondbucket] (16) at (13, 0) {13};
        \node [style=secondbucket] (17) at (14, 0) {13};
        \node [style=secondbucket] (18) at (15, 0) {13};
        \node [style=secondbucket] (19) at (17, 0) {13};
        \node [style=secondbucket] (20) at (16, 0) {13};
        \node [style=thirdbucket] (21) at (18, 0) {14};
        \node [style=thirdbucket] (22) at (19, 0) {15};
        \node [style=thirdbucket] (23) at (20, 0) {16};
        \node [style=thirdbucket] (24) at (21, 0) {17};
        \node [style=thirdbucket] (25) at (22, 0) {19};
        \node [style=thirdbucket] (26) at (23, 0) {21};
        \node [style=thirdbucket] (27) at (24, 0) {23};
        \node [style=none] (28) at (0, -1.25) {};
        \node [style=none] (29) at (6, -1.25) {};
        \node [style=none] (30) at (7, -1.25) {};
        \node [style=none] (31) at (17, -1.25) {};
        \node [style=none] (32) at (18, -1.25) {};
        \node [style=none] (33) at (24, -1.25) {};
        \node [style=none] (34) at (3, -2) {};
        \node [style=firstbucket] (35) at (3, -2) {first group};
        \node [style=secondbucket] (36) at (12, -2) {second group (heavy)};
        \node [style=thirdbucket] (39) at (21, -2) {third group};
        \node [style=new style 0] (40) at (25, 0) {x};
        \node [style=firstbucket] (41) at (0, -0.75) {8};
        \node [style=firstbucket] (42) at (1, -0.75) {4};
        \node [style=firstbucket] (43) at (2, -0.75) {5};
        \node [style=firstbucket] (44) at (3, -0.75) {10};
        \node [style=firstbucket] (45) at (4, -0.75) {9};
        \node [style=firstbucket] (46) at (5, -0.75) {9};
        \node [style=firstbucket] (47) at (6, -0.75) {2};
        \node [style=secondbucket] (48) at (9, -0.75) {3};
        \node [style=secondbucket] (49) at (8, -0.75) {5};
        \node [style=secondbucket] (50) at (7, -0.75) {6};
        \node [style=secondbucket] (51) at (10, -0.75) {7};
            \node [style=secondbucket] (52) at (11, -0.75) {9};
        \node [style=secondbucket] (53) at (12, -0.75) {10};
        \node [style=secondbucket] (54) at (13, -0.75) {11};
        \node [style=secondbucket] (55) at (14, -0.75) {13};
        \node [style=secondbucket] (56) at (15, -0.75) {14};
        \node [style=secondbucket] (58) at (16, -0.75) {15};
        \node [style=secondbucket] (57) at (17, -0.75) {17};
        \node [style=thirdbucket] (59) at (18, -0.75) {18};
        \node [style=thirdbucket] (60) at (19, -0.75) {4};
        \node [style=thirdbucket] (61) at (20, -0.75) {2};
        \node [style=thirdbucket] (62) at (21, -0.75) {4};
        \node [style=thirdbucket] (63) at (22, -0.75) {5};
        \node [style=thirdbucket] (64) at (23, -0.75) {12};
        \node [style=thirdbucket] (65) at (24, -0.75) {8};
        \node [style=none] (66) at (-1.25, 0) {};
        \node [style=none] (67) at (-1.25, 0) {Start time:};
        \node [style=none] (68) at (-1.25, -0.75) {};
        \node [style=none] (69) at (-1.25, -0.75) {Duration:};
      \end{pgfonlayer}
      \begin{pgfonlayer}{edgelayer}
        \draw [style=jump, in=120, out=60, looseness=0.75] (2) to (10);
        \draw [style=jump, bend left=45] (12) to (20);
        % \draw [style=jump, bend left=75, looseness=1.50] (18) to (20);
        \draw [style=jump, bend left=75, looseness=1.50] (20) to (19);
        \draw [style=jump, bend left=75, looseness=1.50] (19) to (21);
        \draw [style=jump, bend left] (10) to (11);
        \draw [style=jump, bend left] (11) to (12);
        \draw (32.center) to (33.center);
        \draw (30.center) to (31.center);
        \draw (28.center) to (29.center);
        \draw [style=jump, bend left=60] (21) to (40);
      \end{pgfonlayer}
    \end{tikzpicture}
  }
  \vspace*{-3em}
  \caption{Partitioning a sequence of intervals sorted by start time using
    \textsc{NextSubseq} with parameter $b=9$, with the start time being the \emph{key function}.
    % The invocation
    % results in three groups, of which the second is \emph{heavy}.
    \label{fig:example-partitioning}
  }
  \Description{}
\end{figure*}

\begin{algorithm}[t]
  \small
  \caption{\nextsubseq$(\R, h, b, k)$\label{alg:next-breakpoint}}
  \KwIn{Relation $\R$ sorted according to $k$; current index position
    $h$; subsequence size $b$; \emph{key} function $k$}%
  \KwOut{Subsequence of \R starting from position $h$, either of size
    $\le b$ or with all tuples having the same key}

  \BlankLine %
  \If{$h + b \ge |\R|$}{%
    % \tcc{We are looking at the last chunk, with fewer than $b$
    % elements}
    \Return subsequence $\langle\R_h, \dots, \R_{|\R|-1}\rangle$\;%
  }%

  \BlankLine%
  % \tcc{end index, non inclusive}
  % \Let{$h'$}{$\min\{h + b, |\R|\}$}\;
  \Let{$h'$}{$h+b$}\;%

  \uIf{$k(\R_{h}) = k(\R_{h'})$} {%
    \nllabel{line:heavy-bucket}%
    % {\footnotesize look for the last interval with the same key:}\;
    \While{$h' < |\R| \wedge k(\R_{h}) = k(\R_{h'})$} {%
      \Let{$h'$}{$h'+1$}%
      \nllabel{line:heavy-bucket-end}%
    }%
    % \Let{$h'$}{$h' -1$}\;
    % } \ElseIf{$h' < |\R|$}{
  } \Else {%
    \nllabel{line:split-bucket}%
    % {\footnotesize look backwards for the first interval with a
    % different key:}\;
    % \While{$h' > h \wedge k(\R_{h'-1}) = k(\R_{h'})$} {
    \While{$k(\R_{h'-1}) = k(\R_{h'})$} {%
      \Let{$h'$}{$h'-1$}%
      \nllabel{line:split-bucket-end}%
    }%
    % \Let{$h'$}{$h' -1$}\;
  }%
  % \Return the subsequence of \R from $h$ to $h'$ (excluded)\;
  \Return subsequence $\langle \R_h,\R_{h+1},\dots,\R_{h'-1}\rangle$\;
\end{algorithm}

\begin{algorithm}[t]
	\small
	\caption{\textsc{BuildIndex}$(\R, s)$\label{alg:build-index}}
	\KwIn{Temporal relation $\R$ and page size parameter $s$}
        \KwOut{Grid $G$ partitioning $\R$ by start time and duration,
          along with auxiliary arrays.}

	\BlankLine
	\Let{\Grid}{[][]}\;
	\Let{\Colbounds}{[]}\;
	\Let{\LatestEnds}{[]}\;
	\Let{\Cellbounds}{[][]}\;
	\Let{\MaxDurations}{[][]}\;

	\BlankLine
	Sort $\R$ by interval start time\;
	\Let{$h$}{$0$} \tcc{position in \R}
	\Let{$i$}{$0$} \tcc{column index}
	\While{$h < |\R|$}{
		\Let{\Column}{\algo{NextSubseq}$(\R, h, s^2, \rT \rightarrow \rTS)$}\;
		\nllabel{line:column-break}
		\Let{\Colboundsi}{$\min\{\rTS : r \in \Column\}$}\;
		\Let{\LatestEndsi}{
			$\max\{\LatestEndsPrev, \{\rTE : r \in \Column\}\}$
		}\;

		\BlankLine
		Sort \Column by duration\;

		\Let{$k$}{$0$} \tcc{position in \Column}
		\Let{$j$}{$0$} \tcc{cell index}
		\While{$k < |\Column|$}{
			\Let{\Cell}{\algo{NextSubseq}$(\Column, k, s, \rT \rightarrow |\rT|)$}\;
			\nllabel{line:cell-break}
			Sort \Cell by end time\;
			\Let{\Gridij}{\Cell}\;
			\Let{\Cellboundsij}{$~\qquad\qquad\qquad\min\{|\rT| : r \in \Cell\}$}\;
			\Let{\MaxDurationsij}{$~\qquad\qquad\qquad\max\{|\rT| : r \in \Cell\}$}\;
			\Let{$j$}{$j+1$}\;
			\Let{$k$}{$k + |\Cell|$}\;
		}
		\Let{$i$}{$i + 1$}\;
		\Let{$h$}{$h + |\Column|$}\;
	}

	\Return (\Grid, \Colbounds, \LatestEnds, \Cellbounds, \MaxDurations)\;
\end{algorithm}

\begin{example}
  Figure~\ref{fig:example-partitioning} depicts three invocations of
  \nextsubseq on a sequence of sorted start times, with parameter
  $b=9$.  The first jump by 9 positions would split the run of
  intervals with start time $13$.  Therefore, the algorithm iterates
  back until the first start time $< 13$.  The second invocation would
  again split the same run since it contains more than 9 intervals
  with start time $13$.  This time, since the endpoints of the jump
  have the same value, the algorithm iterates forward until the last
  interval with the same start time, thus finding a \emph{heavy}
  subsequence.  The last jump defines the third group.
\end{example}

\begin{figure*}%[t]
  \includegraphics[width=\textwidth]{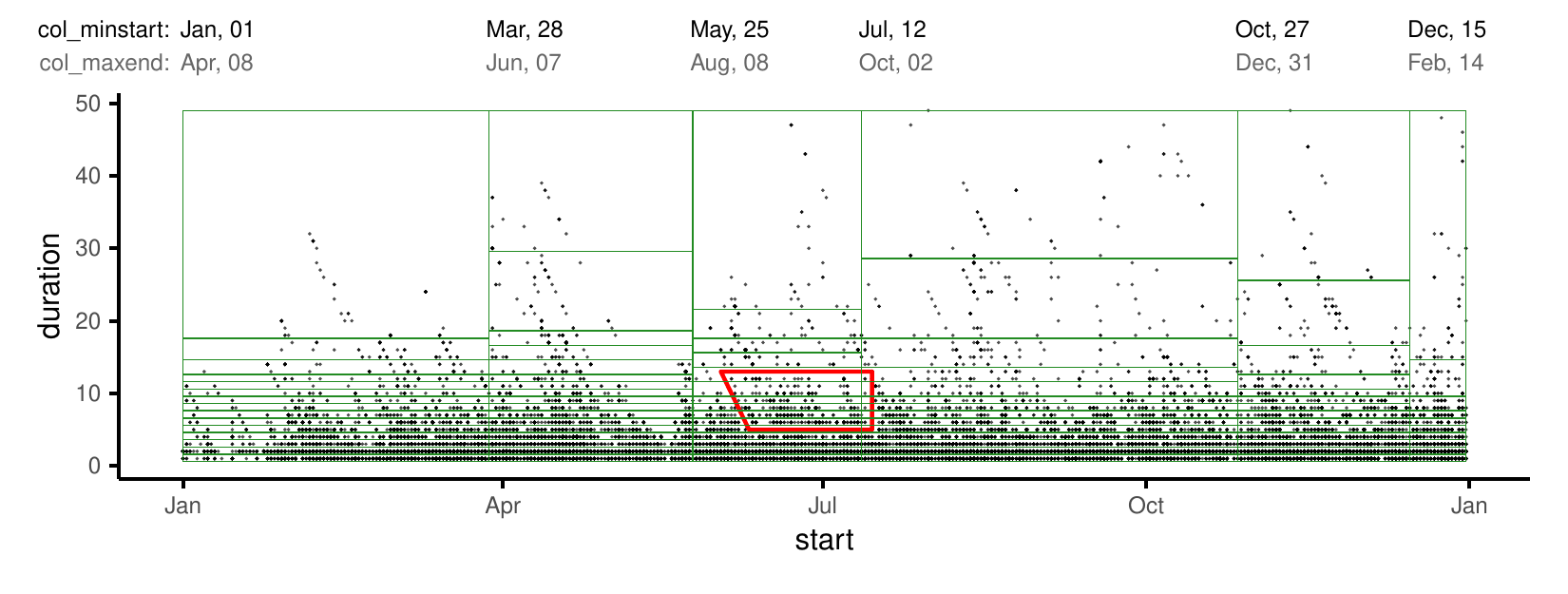}
  \caption{Instantiation of \rd with page size $s=70$ on the example dataset. 
    Above the plot we report the \Colbounds and \LatestEnds arrays.
    \label{fig:example-grid}}
\end{figure*}
    
We are now ready to describe the index construction procedure
\buildindex, which is shown in Algorithm~\ref{alg:build-index}. Let
$s$ be the \emph{page size} parameter, and \R be the relation to be
indexed. First, we sort \R by increasing start time.
Then, we repeatedly invoke \nextsubseq (line~\ref{line:column-break})
to divide the relation into \emph{columns} of $s^2$ tuples each, based
on their start times.  Defining columns in this way allows,
intuitively, to then further partition each column in $s$ cells of $s$
elements each\footnote{We could use two separate parameters to control
  the number of partitions in the two dimensions, but using $s$ and
  $s^2$ makes for a simpler exposition, analysis and
  implementation.}. As a special case, if there are more than $s^2$
tuples with the same start time, \textsc{NextSubseq} will assign them
to the same column, which we then deem \emph{heavy}.  We also keep
track of the minimum start time in each column using an auxiliary
array \Colbounds.  Similarly, the array \LatestEnds stores the
cumulative maximum end time in the columns, i.e., the maximum end time
found so far in the relation.  Both of these arrays will be used at
query time: the first to find the first column to inspect for a given
query, the second to determine when to stop iterating through columns.
Note that \LatestEnds records the \emph{cumulative} maximum end time
of columns.  This ensures that no interval in columns $\le i$ ends
after \LatestEndsi, which will be useful at query time.

Each column is further partitioned in \emph{cells} in a similar
way. First we sort the tuples in the column by increasing
duration. Then we define cells of size $s$ by repeatedly invoking
\textsc{NextSubseq} (line~\ref{line:cell-break}).  Again, if there are
more than $s$ tuples with the same duration, they will all be assigned
to the same cell, which will then be deemed \emph{heavy}.  Similarly
to columns, also cells are complemented by two arrays of ancillary
information: \Cellbounds stores the minimum duration in each cell to
be used at query time to find the first cell to inspect, while
\MaxDurations stores the maximum duration in each cell, which at query
time will determine when to stop iterating through cells.

Finally, tuples in each cell are sorted by the end time.
This is useful at query time, since it allows to stop queries early, as we shall discuss in the proof of Theorem~\ref{thm:query}.

We now formally define light and heavy columns and cells, since they play a key role in the proof of the performance of our index structure.

\begin{definition}\label{def:light-heavy}
  For a given page size $s$,
  a \emph{heavy} column (resp. cell) contains $> s^2$ (resp. $> s$) intervals.
  Conversely, a \emph{light} column (resp. cell) contains $\le s^2$ (resp. $\le s$) intervals.
\end{definition}

\begin{example}
  Figure~\ref{fig:example-grid} shows the grid constructed by
  Algorithm~\ref{alg:build-index} on our example relation from
  Figure~\ref{fig:example-plane}, with page size parameter
  $s=70$. Note that the columns, which contain $s^2=4\,900$ tuples
  each, span different ranges of start times, adapting to the density
  of the points.  Within each column, the points are partitioned
  according to the distribution of durations.  Since many drug
  prescriptions have the same short durations, the cells at the bottom
  of the columns are \emph{heavy} (or span only a few different
  duration values).
  In this setting, a uniform grid
  would suffer from a high imbalance in the number of tuples in each
  cell.
\end{example}

\subsection{Querying the Index}

Given a range-duration query with time range $t = [\TS, \TE)$ and
duration interval $d = [\dmin,\dmax]$, recall that a tuple $r \in \R$
satisfies the query if the
conditions~\eqref{eq:duration-condition},~\eqref{eq:start-condition},
and~\eqref{eq:end-condition} specified in Section~\ref{sec:overview} are
met.  We report the conditions here for convenience:
\begin{align}
  & |\rT| \in [\dmin, \dmax] \tag{\ref{eq:duration-condition}} \\
  & \rTE < \TS                  \tag{\ref{eq:start-condition}}     \\
  & \rTS < \TE                  \tag{\ref{eq:end-condition}}
\end{align}

The pseudocode for querying the index structure is reported in
Algorithm~\ref{alg:query}.  First, we seek the index of the last
column that might contain matching tuples.  To this end, we perform a
binary search on the array \Colbounds to find the last column $i$ such
that the minimum start time in the column (which is the column bound)
is strictly less than the query end time \TE \
(line~\ref{line:search-column}).  This ensures that the column contains at least one tuple
satisfying condition \eqref{eq:end-condition}.  Then, we iterate
\emph{backwards} through columns until column $i$ cannot possibly
contain tuples satisfying the query.  For this, we use the support
array \LatestEnds: if $\LatestEndsi \le \TS$ then we know that all the
columns at index $\le i$ contain tuples that stop earlier than the
start of the query range.  Hence, we can avoid inspecting them because
of condition~\eqref{eq:start-condition}.

For each column that we consider, a binary search on the array
\Cellboundsi is performed, looking for the last cell $j$ such that the
minimum duration in the cell (which is the cell bound) is $\le$ to the
maximum duration $\dmax$ specified in the query
(line~\ref{line:search-cell}). Doing so ensures that at least one
tuple in the cell satisfies the upper bound of condition
\eqref{eq:duration-condition}. Then, we iterate backwards through the
cells until we reach a cell whose maximum duration is less than the
minimum duration $\dmin$ of the query. At this point we stop since
condition~\eqref{eq:duration-condition} can no longer be satisfied.

Finally, we iterate through the tuples of each considered cell by
decreasing end time and stop as soon as
condition~\eqref{eq:start-condition} is no longer satisfied
(line~\ref{line:inner-break}). All the intervals that satisfy the query
are returned in the result.

\begin{algorithm}%[t]
  \small
  \caption{\query({\small\Grid, $[\TS, \TE)$, $[\dmin, \dmax]$})\label{alg:query}} 
  \KwIn{
    A range duration query with time range $[\TS, \TE)$ and duration range $[\dmin, \dmax]$; 
    An index \Grid with the ancillary arrays \Colbounds, \Cellbounds, \LatestEnds, and \MaxDurations}

  \BlankLine
  \Let{$res$}{$\emptyset$}\;
  \Let{$i$}{$\argmax_{i} \Colboundsi < \TE$}\;
  \nllabel{line:search-column}
  \While{$i \ge 0 \wedge \TS < \LatestEndsi$} {
    \Let{$j$}{$\argmax_{j} \Cellboundsij \le \dmax$}\;
    \nllabel{line:search-cell}
    \While{$j \ge 0 \wedge \MaxDurationsij \ge \dmin$}{
      \For{$r \in \Gridij$}{
        \If{$\rTE \le \TS$}{
        \nllabel{line:inner-break}
          \textbf{break}\;
        }
        \If{$|\rT| \in [\dmin, \dmax] \wedge \rTS < \TE$}{
          \Let{$res$}{$res \cup \{r\}$}\;
        }
      }
      \Let{$j$}{$j-1$}\;
    }
    \Let{$i$}{$i-1$}\;
  }
  \Return $res$
\end{algorithm}

\subsection{Updating the Index}
\label{sec:updating-index}

Our index data structure can be extended to support both insertion and
removal of tuples.  

\paragraph{Interval Insertion.}
To insert a tuple $r$, we query the index for the start time \rTS and
the duration $|\rT|$ to identify the cell that should contain $r$.
Inserting new intervals into cells might make them grow too large to
be able to maintain the performance guarantees on the query time.
Luckily, as we shall see in Section~\ref{sec:analysis}, \emph{heavy}
columns and cells do not present issues upon insertions by virtue of
containing intervals that all share either the same start time or the
same duration.
If a light column exceeds size $s^2$, we replace it with two new
columns. Similarly, if a light cell exceeds size $s$, we replace it
with two new cells. 

Splitting a column entails to consider all the intervals it contains,
using \nextsubseq with $b = s^2/2 + 1$ to find the breakpoint at which to
split (this way, we balance the size of the new columns). For each of
the two new columns that replace the original column, we apply
\nextsubseq to split them into cells, exactly as in the inner loop of
the index construction.  The array \Colbounds and \LatestEnds are
updated to reflect the replacement of the old column with the new
ones.

Similarly, to split a cell that exceeds size $s$ in column $i$ we use
\nextsubseq with $b=s/2 + 1$ to find a new breakpoint and replace the cell with two new
cells. The auxiliary structures \Cellboundsi and \MaxDurationsi are
updated accordingly.

Both in the case of column and cell splitting, we sort all the intervals in the newly created cells by end time.

\paragraph{Interval Removal.}
As for the removal of a tuple $r$ from the index, we query the index
to find the cell $\Gridij$ that contains $\rT$ and remove the interval
from the cell.  As a consequence, the cell might contain fewer than
$s/2$ items.  
As we shall see with Lemma~\ref{lem:num}, it is crucial for the performance of the index that cells contain at least $s/2$ intervals.

Therefore, upon
removal of an element from a cell, we check whether the sum of
elements of the cell and either of the adjacent ones is less than $s$.
In such case, we \emph{merge} the two cells, i.e., we replace them
with a single cell where all intervals are then sorted by decreasing
end time.  After cells are merged, the arrays \Cellboundsi and
\MaxDurationsi are updated as well to reflect the changes.

Similarly, a removal might cause a column to have fewer than $s^2/2$
elements.  We then apply a similar reasoning.  If the sum of the
number of items in the column and either adjacent ones is smaller than
$s^2/2$, we merge the two columns, i.e., the two columns are replaced by
a single one.  \nextsubseq is then called to find the breakpoints to
divide the newly created column into cells.  After the two columns are
merged, the arrays \Colbounds and \LatestEnds are updated to reflect
the changes.

\section{Analysis}
\label{sec:analysis}

In this section, we provide guarantees on the time required by all the
operations supported by our index structure.  We assume that the index
is built by partitioning first in the time dimension, and then in the
duration dimension.  The asymptotic results presented in this section
hold for both dimension orderings.

\subsection{Querying the Index}

Before stating our main result, we establish the following fundamental
facts about heavy columns and cells.

\begin{lemma}\label{lem:heavy}
  A heavy cell in a heavy column contains only copies of the same interval.
\end{lemma}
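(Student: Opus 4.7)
The plan is to trace the definition of heaviness back through the behavior of \nextsubseq and observe that the two heaviness conditions together pin down both coordinates of the interval.

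First I would recall how heavy columns and heavy cells arise in \buildindex. Columns are produced by repeatedly calling \nextsubseq on the relation sorted by start time with size parameter $b = s^2$, and cells inside a column are produced by calling \nextsubseq on the column sorted by duration with size parameter $b = s$. Inspecting Algorithm~\ref{alg:next-breakpoint}, the returned subsequence can only exceed $b$ in the branch of lines~\ref{line:heavy-bucket}--\ref{line:heavy-bucket-end}, which is entered precisely when $k(\R_h) = k(\R_{h'})$ and then extends forward as long as the key stays equal to $k(\R_h)$. In every other branch the returned subsequence has size $\le b$. Hence, by Definition~\ref{def:light-heavy}, a heavy subsequence consists entirely of tuples with the same value of the key used by that invocation of \nextsubseq.

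Applying this observation twice yields the claim. If a column is heavy, then the key used to form it, namely $r \mapsto \rTS$, is constant across the column, so every interval in the column has the same start time $\TS^{*}$. If additionally one of its cells is heavy, then the cell-level key $r \mapsto |\rT|$ is constant on that cell, so every interval there has the same duration $\delta^{*}$. But a half-open interval is uniquely determined by its start time and its duration, since its end time is $\TS^{*} + \delta^{*}$. Therefore all intervals in the heavy cell are copies of the single interval $[\TS^{*}, \TS^{*} + \delta^{*})$.

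There is no real obstacle here: the proof is essentially a bookkeeping argument about which branch of \nextsubseq can produce an oversized output, combined with the trivial fact that start time plus duration determines the interval. The only thing to be careful about is to distinguish the two different key functions used at the column and cell level, and to cite the correct branch of Algorithm~\ref{alg:next-breakpoint} when concluding that an oversized subsequence is key-constant.
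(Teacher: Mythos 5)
Your proof is correct and follows the same route as the paper's: the paper's proof simply asserts "by construction" that a heavy column has a constant start time and a heavy cell a constant duration, where you spell out why the only branch of \nextsubseq that can return an oversized subsequence is the key-constant one. The elaboration is accurate and the conclusion that start time plus duration determines the interval matches the paper exactly.
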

\begin{proof}
  By construction, a heavy column contains only intervals with the
  same start time.  Similarly, a heavy cell contains only intervals
  with the same duration.  Therefore, a heavy cell in a heavy column
  contains intervals with the same start time and duration, i.e.,
  multiple copies of the same interval. \qed
\end{proof}

\begin{lemma}\label{lem:num}
  The \rd with parameter $s$ over $n$ intervals has $\BO{\frac{n}{s^2}}$ columns, each
  having $\BO{\frac{n}{s}}$ cells.
\end{lemma}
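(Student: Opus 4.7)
The plan is to bound the two quantities separately with the same amortization idea, since cells within a column are partitioned by \nextsubseq with parameter $s$ in exactly the same way as the relation is partitioned into columns with parameter $s^2$.

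I first focus on the number of columns $p$. Let $\sigma_i$ denote the size of column $C_i$ and $h_i$ its starting index in $\R$. The key claim I will establish is that any two consecutive columns satisfy $\sigma_i + \sigma_{i+1} > s^2$. Distinguish two cases according to which branch of \nextsubseq produced $C_i$. If \nextsubseq took the heavy branch (line~\ref{line:heavy-bucket}), then $\sigma_i > s^2$ by Definition~\ref{def:light-heavy} and the claim is immediate. Otherwise \nextsubseq took the light branch (line~\ref{line:split-bucket}), and by construction $h_{i+1}$ is the left endpoint of the run of equal start-time keys that contains the trial position $h_i + s^2$; this run therefore has length at least $s^2 - \sigma_i + 1$. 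Since \nextsubseq never splits a run across consecutive subsequences, the entire run is contained in $C_{i+1}$, giving $\sigma_{i+1} \ge s^2 - \sigma_i + 1$ and hence $\sigma_i + \sigma_{i+1} \ge s^2 + 1$. Summing the claim over consecutive pairs $(C_1, C_2), (C_3, C_4), \ldots$ yields $n \ge \lfloor p/2 \rfloor (s^2 + 1)$, whence $p = O(n/s^2)$.

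For the number of cells within a column $C_i$, the same argument transfers verbatim with $s^2$ replaced by $s$ and with duration as the key function, showing that any two consecutive cells have combined size exceeding $s$. This gives $O(\sigma_i/s)$ cells per column; combined with the trivial bound $\sigma_i \le n$ this yields the claimed $O(n/s)$ cells per column.

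The main difficulty is the light branch, where a column can in fact shrink to as little as a single tuple, which, taken in isolation, would ruin the bound. The insight that makes the amortization work is that such shrinkage is always caused by a long run of equal keys straddling the trial boundary, and this run must be absorbed wholesale into the next column; the deficit is therefore recovered on the very next step, which is precisely what licenses the pairwise claim.
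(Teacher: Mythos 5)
Your proof is correct, and it takes a genuinely different route from the paper's. The paper bounds the number of columns by exhibiting an extremal instance (alternating groups of $s^2/2+1$ and $s^2/2$ equal start times) and arguing informally that in the worst case every column has size at least $s^2/2$; it does not engage with the fact that on an arbitrary input a single light column can shrink to one tuple. Your amortized pairwise claim---that for every input, consecutive columns satisfy $\sigma_i + \sigma_{i+1} > s^2$, because in the light branch the run straddling the trial position $h_i+s^2$ has length at least $s^2-\sigma_i+1$ and is absorbed wholesale into $C_{i+1}$ regardless of which branch produces $C_{i+1}$---handles exactly the degenerate case the paper glosses over, and it yields the bound for all inputs rather than arguing about which instance is extremal. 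The trade-off is that the paper's construction doubles as a lower-bound witness showing the $\BO{n/s^2}$ column count is actually attained, which your argument does not provide (nor does the lemma require it). Your reduction of the cell bound to the same argument with parameter $s$ and the trivial bound $\sigma_i \le n$ matches the paper's treatment. The only point worth making explicit is that the pairwise claim need only be verified for pairs in which $C_{i+1}$ exists, so the terminal branch of \nextsubseq (the case $h+b \ge |\R|$) never interferes; your write-up implicitly assumes this, and it is easily checked.
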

\begin{proof}
  To upper bound the number of columns into which a relation can be
  partitioned, we first devise a set of intervals that will force \rd
  to use the maximum number of columns.  Consider a relation such that
  exactly $s^2/2+1$ intervals have start time 1, $s^2/2$ intervals
  have start time 2, $s^2/2 + 1$ intervals have start time 3, and so
  on.  \rd will have to build a separate light column for each
  distinct start time.  Note that if the subsets of intervals with the
  same start time were any smaller, \rd would create columns
  containing more than one start time, resulting in fewer columns
  overall.  Therefore, each column has size $\ge s^2/2$, which
  implies that there are $\BO{n/s^2}$ columns.

  As for cells, the worst case occurs if all intervals fall in the same
  column.  With a reasoning similar to the above argument, we have
  that these $n$
  intervals in one column are
  partitioned in $\BO{n/s}$ cells. \qed
  % \jg{What is $n$? Should we use $m$ here?}\mc{I didn't want to introduce new notation. So $n$ is still the \emph{overall} number of intervals, which clearly upper bounds that number of intervals that can fall in any give column.} 
\end{proof}

\begin{theorem}\label{thm:query}
  Given an index over a set of $n$ intervals and a page size $s$, the
  time for answering a range-duration query is
  \begin{align*}
    \BO{\frac{n}{s^2}\log\frac{n}{s} + \frac{n}{s} + s^2 + k}
  \end{align*}
  where $k$ is the number of intervals matching the query predicate.
\end{theorem}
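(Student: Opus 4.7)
My plan is to decompose the cost of \query into three separately accounted components: the binary searches, the per-cell loop overhead, and the intervals examined in the innermost \textsc{for} loop. I aim to bound them by $\BO{\frac{n}{s^2}\log\frac{n}{s}}$, $\BO{\frac{n}{s}}$, and $\BO{s^2+\frac{n}{s}+k}$ respectively, whose sum matches the claim.

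By Lemma~\ref{lem:num} there are at most $\BO{n/s^2}$ columns and at most $\BO{n/s}$ cells per column, so each binary search (one on \Colbounds, plus one on a \Cellboundsi per visited column) costs $\BO{\log(n/s)}$, and summing over the at most $\BO{n/s^2}$ visited columns yields $\BO{(n/s^2)\log(n/s)}$. A counting argument analogous to that of Lemma~\ref{lem:num}, combined with the fact that each cell contains $\Theta(s)$ intervals on average, further shows that the total number of cells in the grid is $\BO{n/s}$, so the cumulative overhead of entering the outer and inner \textsc{while} loops once per visited cell is $\BO{n/s}$.

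The heart of the proof, and the step I expect to require the most care, is bounding the intervals actually examined by the \textsc{for} loop. I will split them into \emph{matching} intervals, contributing $\BO{k}$ by definition, and \emph{wasted} ones, which enter the loop but are not added to the result. Because cells are sorted by end time and the \textbf{break} at line~\ref{line:inner-break} fires as soon as $\rTE \le \TS$, every wasted interval satisfies $\rTE > \TS$ and hence violates either \eqref{eq:duration-condition} or \eqref{eq:end-condition}. For violations of \eqref{eq:duration-condition}, the separation property of \nextsubseq ensures that the duration ranges of consecutive cells inside a column are disjoint, so at most two cells per column can straddle $\dmin$ or $\dmax$; a heavy cell, being duration-homogeneous, cannot straddle either boundary. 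Each such boundary cell has size $\BO{s}$, and across $\BO{n/s^2}$ visited columns this class totals $\BO{n/s}$.

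For violations of \eqref{eq:end-condition} I would argue that only the rightmost visited column $i^*$ (the one selected on line~\ref{line:search-column}) can produce them: by the same separation property applied to start times, every visited column with index $i<i^*$ has its maximum start time strictly smaller than the minimum start time of column $i^*$, which itself is strictly less than $\TE$ by the choice of $i^*$; hence $\rTS<\TE$ holds automatically for every interval in those columns. Within $i^*$, if the column is light it contains at most $s^2$ intervals and so contributes $\BO{s^2}$ waste; if $i^*$ is heavy, all its intervals share a single start time equal to that minimum, which is $<\TE$, so no waste arises at all. Summing the three contributions gives $\BO{(n/s^2)\log(n/s)+n/s+s^2+k}$. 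The most delicate step is this start-time separation argument, which requires treating both branches of \nextsubseq (the forward iteration that absorbs runs of equal keys and the backward iteration that splits at a boundary of distinct keys) uniformly to conclude that the maximum start time of each light column is strictly smaller than the minimum start time of the next.
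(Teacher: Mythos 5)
Your proof is correct and follows essentially the same route as the paper's: both arguments rest on the end-time sort with the early \textbf{break} to dispose of the $\rTE > \TS$ condition, the disjointness of cell duration ranges to confine duration violations to at most two light cells per visited column ($\BO{n/s}$ total), the ordering of columns by start time to confine $\rTS \ge \TE$ violations to the single rightmost visited column (necessarily light, hence $\BO{s^2}$), and the key-homogeneity of heavy cells and columns. The only difference is bookkeeping---you charge each wasted examination to the constraint it violates, whereas the paper classifies the visited cells into four geometric types relative to the query polygon---and you are, if anything, slightly more explicit than the paper about the $\BO{n/s}$ loop-entry overhead across all visited cells.
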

\begin{proof}
  Let $\T = [\TS, \TE)$ be the query interval and $d=[\dmin, \dmax]$
  be the duration range of the query.  Recall that the index
  construction algorithm may create a heavy column or cell when it
  cannot break down a group of intervals because they all share the
  same start time or duration.  Furthermore, recall that by
  construction \emph{light} columns (resp. cells) are of size
  $\le s^2$ (resp. $\le s$).

  A range-duration query defines a \emph{query polygon} in the start
  time $\times$ duration space, as depicted in
  Figure~\ref{fig:example-grid}.  For a given cell $\Gridij$, we
  denote by $k_{ij}$ the number of intervals in the cell that match
  the query.
  The query time is comprised of two parts: (a) we have to find the
  correct range of cells to query, and (b) we have to filter the
  intervals in each cell to retrieve the ones that are part of the
  answer.

  As for part (a), finding the cells to inspect, we first determine
  the last column that can contain tuples satisfying the query
  constraint. For this we use binary search over the $\BO{n/s^2}$
  boundaries between columns (Lemma~\ref{lem:num}) in
  $\BO{\log (n/s^2)}$ time.  Iterating backwards through the columns
  visits at most $\BO{n / s^2}$ columns.  In each such column, to find
  the last cell with intervals satisfying the duration constraint, we
  do a binary search over the $\BO{n/s}$ cells in the column
  (Lemma~\ref{lem:num}).  Therefore, the overall time to find the
  correct cells is
  \begin{align*}
    \BO{\log\frac{n}{s^2} + \frac{n}{s^2} \log\frac{n}{s}} =
    \BO{\frac{n}{s^2} \log\frac{n}{s}}
  \end{align*}

  As for part (b), enumerating the results from the cells, we can
  distinguish four possible cases for a cell $\Gridij$:
  \begin{enumerate}[leftmargin=1.3em]
  \item Both the cell's durations range and start times range fall
    within the query polygon. In such cells, the query constraints on
    the duration and the start time are satisfied by construction. The
    algorithm considers just the intervals of $\Gridij$, which are
    part of the output (since it iterates over the cell's intervals by
    decreasing end time). Therefore, the time spent enumerating
    intervals from the cell is $\BO{k_{ij}}$.  Also, note that this is
    the only case in which a heavy cell in a heavy column is visited
    by the query algorithm. Such cells can potentially have $\BO{n}$
    copies of the same interval. If the algorithm visits the cell,
    then it means that such copies are all part of the output, hence
    we pay $\BT{k_{ij}}$ in the complexity.

  \item 
    The cell's start times all satisfy the query constraints, but
    not all its durations do so.
    The cell must be light, otherwise all its durations would be equal and would need to satisfy the duration constraint, contradicting the assumption.
    Therefore, the algorithm has to evaluate $\BO{s}$ intervals.

  \item 
    The cell's durations all satisfy the query constraints, but not all its start times do.
    In this case we cannot be in a heavy column (which contains a single start time) and thus the column contains at most $s^2$ intervals.

  \item Only some of the cell's intervals satisfy both constraints. In
    this case neither the column nor the cell can be heavy, therefore
    the algorithm must only evaluate $\BO{s}$ intervals.
  \end{enumerate}
  All the cells of type (1) account for at most $\BO{k}$ in terms of
  running time.  As for cells of type (2) there are at most
  $\BO{n/s^2}$ of them. Overall they account for $\BO{\frac{n}{s}}$
  interval evaluations.  Cells of type (3) account for $\BO{s^2}$
  interval evaluations overall.  Finally, there is a constant number
  of cells of type (4) (which are at the corners of the query
  polygon).  These cells therefore account for at most $\BO{s}$
  interval evaluations in total.  Overall, the time to evaluate the
  intervals in the cells is $ \BO{\frac{n}{s} + s^2 + k} $.

  Combining this last result with the time to find the range of cells
  to query, we have that the overall query time is
  \begin{align*}
    \BO{\frac{n}{s^2}\log\frac{n}{s} + \frac{n}{s} + s^2 + k}.
  \end{align*}
   \hfill \qed
\end{proof}

The above theorem exposes a fundamental tradeoff of our data
structure: using a smaller page size allows to improve the precision
of the data structure (by looking at fewer intervals that are not part
of the query output), while at the same time increasing the number of
columns that need to be queried.  In Section~\ref{sec:param-dep} we
investigate experimentally the effect of $s$ on the performance,
finding that the best performance is attained for
$n/s^2 \in [50,500]$.

The choice of the order of partitioning has no impact on the
theoretical complexity results, however it affects the practical
performance as we will discuss in Section~\ref{sec:experiments}
(Figure~\ref{fig:param-dependency}). It turns out that generally it is
better to index first the duration and then the start time, as
summarized in the following observation.

\begin{observation}\label{obs:dimension-order}
  While changing the order in which dimensions are indexed does not
  change the asymptotic behavior of \ours, the choice of such order
  affects the practical performance.  In fact range and duration
  constraints play different roles in queries. When querying for
  duration, the position of start times on the timeline does not
  affect the outcome of a query. Conversely, a range query can be
  rephrased in terms of start times and durations.  Therefore, range
  queries can benefit from a partition of the duration dimension,
  since it implies a partition of the end times.  Furthermore,
  duration constraints are typically much more selective than range
  constraints. Therefore, indexing first by duration might imply that
  a query has to iterate through fewer columns, compared to the case
  where start times are indexed first.
\end{observation}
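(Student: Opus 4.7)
The plan is to split the observation into two components: a formal claim that the worst-case bound of Theorem~\ref{thm:query} is invariant under swapping the two dimensions, and a qualitative claim that duration-first indexing is usually preferable in practice. The first admits a clean symmetry argument; the second is workload-dependent and must be argued by comparing the work done by Algorithm~\ref{alg:query} on queries whose selectivity is asymmetric across the two dimensions.

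For the asymptotic invariance I would observe that both \buildindex and \query interact with the tuples only through a key function passed to \nextsubseq and through the auxiliary arrays \Colbounds, \LatestEnds, \Cellbounds, and \MaxDurations, all of whose roles are symmetric under axis relabeling. Lemma~\ref{lem:num} gives $\BO{n/s^2}$ columns and $\BO{n/s}$ cells per column irrespective of which coordinate is used as the outer key, since its worst-case construction only relies on runs of equal keys. The four-case decomposition inside the proof of Theorem~\ref{thm:query} likewise refers only to whether a cell's outer/inner key ranges fall inside the query polygon, so swapping the axes turns type-(2) cells into type-(3) cells and vice versa while preserving their combined cost $\BO{n/s + s^2}$. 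Hence the bound $\BO{\frac{n}{s^2}\log\frac{n}{s} + \frac{n}{s} + s^2 + k}$ reappears verbatim under the swapped ordering.

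For the practical half I would compare the work performed on the three extremal query types. For a \qdo query the range $[\TS,\TE)$ spans the whole timeline, so the outer termination condition $\TS < \LatestEndsi$ prunes nothing: with start time as the outer key, every one of the $\Theta(n/s^2)$ columns is entered and the full $\frac{n}{s^2}\log\frac{n}{s}$ term is paid, even for tiny $k$. With duration as the outer key, the binary search on \Colbounds already isolates the columns whose duration bands intersect $[\dmin,\dmax]$, cutting this term down to the width of that band. For \qro queries the mirror argument applies, but less sharply: writing $\rTE = \rTS + |\rT|$ shows that partitioning first by duration clusters end times as well, so the stopping conditions on \LatestEnds and \MaxDurations continue to prune columns effectively. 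Combined with the empirical fact that typical duration predicates are more selective than typical range predicates, this favors the duration-first order.

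The main obstacle is that the practical claim cannot be made unconditional from the structure of \ours alone: on a workload dominated by very narrow range queries and broad duration bands, the start-time-first order would win, and no combinatorial argument from the index structure rules this out. My plan is therefore to phrase the practical half as a conditional statement, parameterized by the relative selectivities of the two predicates, whose formal content is the column-count comparison sketched above, and to defer the unconditional claim to the empirical evaluation in Section~\ref{sec:experiments}.
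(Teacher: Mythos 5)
Your proposal matches the paper's treatment: the statement is an informal Observation whose justification consists of exactly the two components you identify --- the asymptotic invariance is inherited from the symmetry of Lemma~\ref{lem:num} and Theorem~\ref{thm:query} (the paper simply asserts that all descriptions and proofs hold with the dimensions swapped), and the duration-first preference is argued via the same end-time-clustering and relative-selectivity reasoning, then validated empirically in Section~\ref{sec:param-dep}. Your explicit caveat that the practical half is workload-dependent and cannot be made unconditional is also consistent with the paper, which hedges with ``might imply'' and defers the final word to Figure~\ref{fig:param-dependency}.
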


\subsection{Index Construction and Update}

\begin{theorem}
	Given a set of $n$ intervals, building the index requires time $\BO{n \log n}$.
\end{theorem}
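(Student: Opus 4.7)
The plan is to chase the total cost of Algorithm~\ref{alg:build-index} line by line, grouping the work into three categories: the initial global sort, the calls to \nextsubseq, and the per-column/per-cell sorts. The first contribution is the sort of $\R$ by start time, which costs $\BO{n \log n}$ and dominates, so the goal reduces to showing that all remaining work is bounded by $\BO{n \log n}$.

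Next I would argue that the total work spent in \nextsubseq across the whole construction is $\BO{n}$. For a single call with parameter $b$, the initial jump and the backward scan together cost $\BO{b}$, and the forward scan in the ``heavy'' branch costs $\BO{\ell}$ where $\ell$ is the length of the heavy run that is ultimately returned. In either case the cost is linear in the size of the returned subsequence, and each tuple belongs to exactly one column-level subsequence and to exactly one cell-level subsequence, so summing over all \nextsubseq invocations (for columns and for cells) yields $\BO{n}$ work in total. The same amortization bounds the time spent populating \Colbounds, \LatestEnds, \Cellbounds, and \MaxDurations, since each of these entries is computed by a single linear scan over the corresponding subsequence.

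The remaining cost is the sorts: sorting each column by duration, and sorting each cell first by duration (implicitly, via the cell partitioning) and then by end time. Here I would use the standard convexity-style argument: if a column has size $c_i$, then sorting it costs $\BO{c_i \log c_i} \le \BO{c_i \log n}$; since $\sum_i c_i = n$, the column sorts contribute $\BO{n \log n}$. The same bound applies to the cell sorts because the cell sizes across the entire grid also sum to $n$.

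Summing the three contributions gives $\BO{n \log n}$. The only subtle point is the \nextsubseq amortization in the presence of heavy columns and cells, where a single call may return a subsequence much larger than the parameter $b$; the key observation is that this extra work is always linear in the size of the returned subsequence, so it is absorbed by the charging scheme rather than paid per call. I do not expect any obstacle beyond writing this amortization cleanly.
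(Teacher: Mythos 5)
Your proof is correct and follows essentially the same route as the paper's: the global $\BO{n\log n}$ sort, an amortized-linear accounting of the \nextsubseq invocations, and per-column/per-cell sorts bounded by $\sum_i c_i \log c_i \le n \log n$. The only (harmless) imprecision is the claim that every \nextsubseq call costs time linear in the returned subsequence --- the backward-scan branch can cost $\Theta(b)$ while returning far fewer tuples --- but since there are only $\BO{n/b}$ calls at each level the total stays $\BO{n}$, which matches the paper's equally informal ``each interval is visited at most once'' argument.
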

\begin{proof}
  Computing column boundaries for the duration requires
  time $\BO{n\log n}$ for sorting the tuples, while invoking
  Algorithm~\ref{alg:next-breakpoint} requires time 
  $\BO{n}$ for iterating through the sorted intervals to find the boundaries: in fact, each of the $n$ intervals is
  visited at most once.

  Let $n_i$ be the number of intervals in the $i$-th column.  Sorting
  the column requires time $\BO{n_i\log n_i}$.  Then, the invocation
  of Algorithm~\ref{alg:next-breakpoint} to find cell boundaries
  requires time $\BO{n_i}$, following the same argument as above.
  Similarly, let $n_{ij}$ be the size of cell $ij$. Sorting its tuples
  by end time requires time $\BO{n_{ij} \log n_{ij}}$.  Therefore, the
  time for running the inner loop is $\BO{n \log n}$ overall, and the
  theorem follows. \qed
\end{proof}

\begin{theorem}
	Inserting a tuple $r$ into the index requires
	time $\BO{\log n/s + n/s + s^2 \log s}$.
\end{theorem}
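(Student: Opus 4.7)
The plan is to split the insertion procedure described in Section~\ref{sec:updating-index} into four phases---locating the target cell, placing $r$ inside it, a possible cell split, and a possible column split---bound each of them separately using Lemma~\ref{lem:num}, and then sum the contributions.

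For the locate phase, a binary search over \Colbounds costs $O(\log(n/s^2))$ since there are $O(n/s^2)$ columns, and a second binary search over \Cellboundsi for the chosen column $i$ costs $O(\log(n/s))$, since a column may contain up to $O(n/s)$ cells. The physical insertion of $r$ into its cell while maintaining the end-time sort requires at most $O(s)$ shifts. Both contributions are absorbed into the claimed bound.

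The cell-split phase is the delicate one. Because heavy cells contain only copies of a single (start time, duration) pair, they never need to be split, so an overflowing cell contains exactly $s+1$ tuples. Running \nextsubseq with $b=s/2+1$ on this short sequence, followed by sorting each of the two resulting cells by end time, costs $O(s\log s)$. The crucial point is updating the auxiliary arrays \Cellboundsi and \MaxDurationsi: inserting one new entry shifts up to $O(n/s)$ positions, again by Lemma~\ref{lem:num} applied to the column in question, which may be heavy and thus carry many distinct durations. This is exactly the origin of the $n/s$ term in the statement.

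The column-split phase can only be triggered for a light column, which therefore contains at most $s^2+1$ tuples. Re-executing the inner loop of Algorithm~\ref{alg:build-index} on each of the two new sub-columns---sort by duration, call \nextsubseq, and sort each resulting cell by end time---costs $O(s^2\log s)$ in total. The outer bookkeeping on \Colbounds, \LatestEnds, and the outer dimensions of the 2D arrays shifts only $O(n/s^2)$ entries, which is absorbed by the previous terms; moreover \LatestEnds is a cumulative maximum and the indexed set is unchanged by the split, so only the two entries corresponding to the new columns must be recomputed locally. Summing the four contributions gives $O(\log(n/s)+s+s\log s+n/s+s^2\log s)=O(\log(n/s)+n/s+s^2\log s)$. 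The main obstacle I anticipate is verifying that the bookkeeping cost is really capped at $O(n/s)$: this relies on Lemma~\ref{lem:num}'s worst-case bound on the number of cells per column together with the fact that a single insertion triggers at most one cell split and at most one column split, so the splitting costs cannot cascade or multiply.
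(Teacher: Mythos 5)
Your proof is correct and follows essentially the same route as the paper's: binary searches to locate the cell, the observation that heavy cells and columns never require splitting, an $O(s^2\log s)$ bound for re-partitioning a split light column, and the $O(n/s)$ term coming from the in-order update of the per-column cell-boundary arrays. One small imprecision: your stated reason that heavy cells "contain only copies of a single (start time, duration) pair" holds only for heavy cells inside heavy columns (Lemma~\ref{lem:heavy}); in general a heavy cell merely contains tuples sharing the same duration, and the correct justification is that cells are permitted to exceed size $s$ exactly when all their keys coincide, so the conclusion that no split is needed is unaffected.
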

\begin{proof}
  Finding the cell that should contain $\rT$ requires performing first a binary search over the $\BO{\frac{n}{s^2}}$ columns and then over the $\BO{\frac{n}{s}}$ cells of the resulting column (by Lemma~\ref{lem:num}), for a total time $\BO{\log n/s}$.

  Note that a heavy column or cell does not need to be split, since by
  definition they can contain more than $s^2$ (resp. $s$) intervals,
  as long as they all have the same start time (resp. duration).
  Therefore we consider only the case of splitting a light column
  or light cell.
  Splitting a column with $s^2 + 1$ intervals requires to first apply
  Algorithm~\ref{alg:next-breakpoint} (which takes time $\BO{s^2}$),
  sort the newly created columns (requiring time at most
  $\BO{s^2 \log s}$) and then applying again
  Algorithm~\ref{alg:next-breakpoint} on the new columns to find the
  cells (for an overall time of $\BO{s^2}$).  Furthermore, inserting
  the new column boundaries in order in the sorted array of boundaries
  requires $\BO{\frac{n}{s^2}}$ time.  By similar reasoning, splitting
  a cell with $s+1$ intervals requires time $\BO{s\log s}$, and
  updating cell boundaries by inserting in-order requires
  $\BO{\frac{n}{s}}$. \qed
\end{proof}

In a similar way we can prove the following theorem.

\begin{theorem}
  Removing a tuple $r$ from the index requires time
  $\BO{\log n/s + n/s}$.
\end{theorem}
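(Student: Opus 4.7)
The plan is to mirror the proof of the preceding insertion theorem, as the authors themselves hint by writing ``in a similar way''. The removal procedure decomposes naturally into (i) locating the cell $\Gridij$ that should contain $r$, (ii) deleting $r$ from that cell, and (iii) possibly merging the resulting underfull cell (or its enclosing column) with an adjacent sibling. I would bound each part in turn and sum.

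For part (i), Lemma~\ref{lem:num} guarantees $\BO{n/s^2}$ columns, each with $\BO{n/s}$ cells. A binary search on \Colbounds locates the column containing $\rTS$ in $\BO{\log(n/s^2)}$ time, and a second binary search on the \Cellboundsi array of that column locates the target cell in $\BO{\log(n/s)}$ time; together $\BO{\log(n/s)}$. Part (ii) shifts one entry out of the cell's end-time-sorted array in $\BO{s}$ time and requires no re-sort, because deleting a single element preserves end-time order; this cost is absorbed into $\BO{n/s}$.

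For part (iii), I would first invoke Lemma~\ref{lem:heavy} to argue that heavy cells and heavy columns never trigger a merge: heaviness means that all intervals in the bucket share either a start time or a duration, a property preserved under removals, so the index invariants continue to hold for these buckets without restructuring. Only light cells and light columns are thus merge candidates. A light-cell merge with an adjacent light sibling whose combined size is at most $s$ is a linear merge of two end-time-sorted sequences in $\BO{s}$ time, plus $\BO{n/s}$ to delete one entry in place from each of \Cellboundsi and \MaxDurationsi. Crucially, no fresh sort is needed because both inputs are already end-time sorted, so the $\BO{s \log s}$ term that appears in the analogous cell-splitting step of the insertion proof does not reappear here.

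The main obstacle will be bounding the column-merge cost within the stated bound, since a naive recomputation (resort the merged column by duration, invoke \nextsubseq to form new cells, then re-sort each new cell by end time) would contribute $\BO{s^2 \log s}$, which does not fit inside $\BO{n/s}$ in general. I would sidestep this by exploiting the invariants already present in the two parent columns: cells within each column are duration-ordered and each cell is already end-time sorted, so the merged column's cell list can be formed by a linear interleave of the two parent cell lists in duration order, merging any two adjacent cells with overlapping duration ranges in $\BO{s}$ time each. This yields the new cell decomposition in $\BO{s^2}$ time without any fresh sort, and a further $\BO{n/s^2}$ handles the in-place update of \Colbounds and \LatestEnds. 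Summing contributions (i)--(iii) then yields the claimed $\BO{\log(n/s) + n/s}$ bound.
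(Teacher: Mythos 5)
There is a genuine gap, and it sits exactly where you yourself flagged the difficulty: the column merge. First, note that the paper offers no explicit proof of this theorem (it only says the argument is ``similar'' to the insertion proof), and a faithful ``similar'' argument would re-sort the merged column by duration and re-run \nextsubseq, contributing an $\BO{s^2\log s}$ term just as column splitting does for insertion --- so the tension you identify is real. Your proposed escape, however, does not work. (a) The ``linear interleave'' of the two parent columns' cell lists does not preserve the index invariants: the duration ranges of the cells of the two parents can overlap in a chain (e.g.\ ranges $[1,10],[11,20],\dots$ in one column against $[5,15],[16,25],\dots$ in the other), so merging ``adjacent cells with overlapping duration ranges'' can collapse everything into a single cell of size up to $s^2/2$ containing many distinct durations. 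Such a cell is neither light (size $\le s$) nor legitimately heavy (a single duration), which breaks Definition~\ref{def:light-heavy}, Lemma~\ref{lem:heavy}, and the case analysis in the proof of Theorem~\ref{thm:query}. The only invariant-preserving way to re-form the cells is to run \nextsubseq on a duration-sorted version of the merged column, and that costs $\BO{s^2\log s}$.

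(b) Even granting your $\BO{s^2}$ figure, the final summation is wrong: $s^2$ is not $\BO{n/s}$ in general. At the paper's recommended operating point $n/s^2\in[50,500]$ one has $s^2=\Theta(n)$ while $n/s=\Theta(\sqrt{n})$, so $\BO{\log (n/s)+n/s+s^2}$ strictly exceeds the claimed $\BO{\log (n/s)+n/s}$. To close the argument you would either have to show that the stated bound is meant to exclude (or amortize) the restructuring cost --- e.g.\ by charging each $\BO{s^2\log s}$ merge to the $\Omega(s^2)$ deletions that must occur in a column before it can become underfull --- or accept the weaker per-operation bound $\BO{\log(n/s)+n/s+s^2\log s}$, mirroring the insertion theorem. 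Parts (i) and (ii) of your decomposition, and the observation that heavy cells and columns never trigger merges, are fine.
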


\section{Experimental Evaluation}
\label{sec:experiments}

In this section, we evaluate experimentally our proposed index.  To
frame this evaluation, we consider the following data structures as
baselines: The implementation of \btree provided by the Rust standard
library, which is optimized for CPU cache usage; intervals are indexed
by duration in this case.  The \itree
index~\cite{DBLP:conf/vldb/KriegelPS00}, which we implemented
ourselves, indexing intervals by their position on the timeline.  The
\gfile~\cite{DBLP:journals/tods/NievergeltHS84} and
\pindex{}~\cite{BehrendDGSVRK_SSTD19_period-index}, which we also
implemented, and which index both start times and durations.  We also
consider the \textsc{R-Tree} as a baseline, specifically the
\rtree~\cite{DBLP:conf/sigmod/BeckmannKSS90}: intervals
are mapped to points identified by the start time and duration of the
interval, and then indexed by the \rtree.  Our
proposed data structure will be denoted with \rdtd when start time is
indexed before duration, and with \rddt otherwise.  In cases where the
order of the dimensions is not relevant to the discussion, we will use
\ours instead.  To account for the potential shortcomings of our
implementations, we will also evaluate the relative performance of the
data structures with implementation-independent
metrics~\cite{DBLP:journals/kais/KriegelSZ17}.

We aim to answer the following questions:
\begin{description}
  \item[(Q1)] How robust are the indices over different workloads? (\textsection~\ref{sec:overview-robust})
  \item[(Q2)] How do indices compare on mixed workloads? (\textsection~\ref{sec:mixed})
  \item[(Q3)] How do query times relate with the query selectivity? (\textsection~\ref{sec:query-inspect})
  \item[(Q4)] How does the page size parameter affect the performance of the index? (\textsection~\ref{sec:param-dep})
  \item[(Q5)] How does the performance scale with respect to the size of the dataset? (\textsection~\ref{sec:scalability})
  \item[(Q6)] What is the performance of updating the index?
    (\textsection~\ref{sec:insertion})
\end{description}

\subsection{Setup and Datasets}

\begin{figure*}%[t]
  \includegraphics[width=\textwidth]{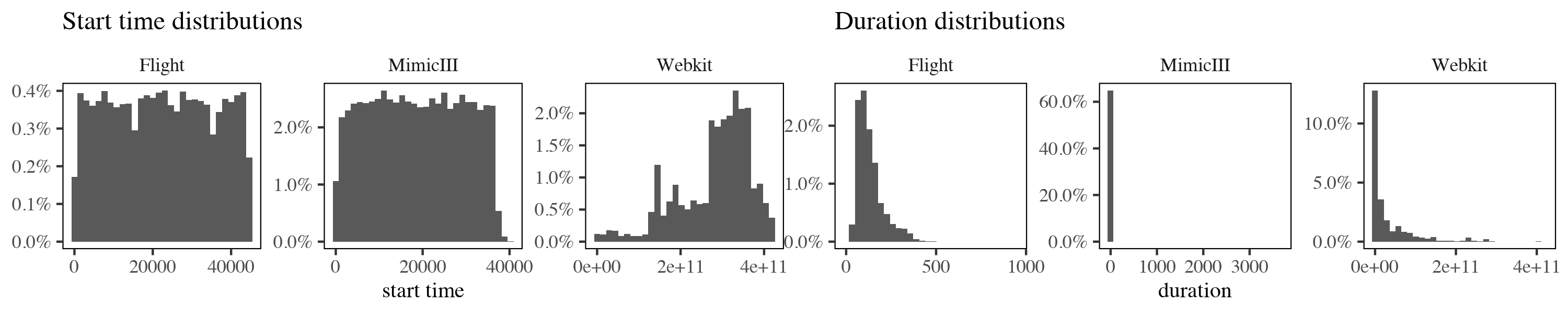}
  \caption{Histograms of the start times and durations of the three
    real-world datasets considered in this
    paper.\label{fig:data-distributions}}
\end{figure*}

We implemented our index and the baseline competitors in Rust 1.44.1,
using a configurable and extensible
framework~\cite{DBLP:conf/sisap/AumullerC20}.  Code and data are
available at~\url{https://github.com/Cecca/temporal-index}.  The
experiments presented in this section were run on a machine equipped
with 94 GB of memory and a Intel\textregistered Xeon\textregistered
CPU E5-2667 v3 @ 3.20GHz processor.

As a benchmark we consider the following datasets and workloads:

\paragraph{\dataset{Flight}:}
A set of 701\,353 flights, identified by their takeoff and landing
time at the granularity of minute, covering August 2018.  Query ranges
on this dataset are generated at random.  Time range durations are
uniformly distributed between one and 31 days, and duration ranges are
uniformly distributed between one minute and one day.

\paragraph{\dataset{Webkit}:}
1\,547\,419 file edits in the Webkit source code repository.
Intervals represent the timespan between successive edits to a file.
Query ranges on this dataset are generated at random.  Time range
durations are uniformly distributed between one minute and one year,
and duration ranges are uniformly distributed between up to three
years.

\paragraph{\dataset{MimicIII}:}
4\,134\,909 drug prescriptions from the open MimicIII
database~\cite{mimiciii}.  Each prescription is characterized by its
start and end day.  Queries, generated at random, span the entire
domain of times and durations: the former take values
$\in [1, 40\,251]$, the latter $\in [1, 200]$.  The large span of
start times (110 years) is due to the anonymization procedure applied
to the database.

\paragraph{\dataset{Synthetic}:} Randomly generated datasets with 10
million intervals by default.  The interval start times are uniformly
distributed in $[1, n]$, where $n$ is the size of the dataset;
interval durations follow a Zipf distribution with $\beta=1$.

\bigskip

Figure~\ref{fig:data-distributions} reports the distribution of start
times and durations of the three real-world datasets.  Note that the
start times of the \dataset{Flight} dataset are approximately
uniformly distributed, with durations geometrically distributed.  The
\dataset{MimicIII} dataset has similarly distributed start times, and
duration values which are very concentrated.  Finally, the
\dataset{Webkit} dataset has a distribution of durations similar to
\dataset{Flight}, but the distribution of start times is skewed
towards recent times, meaning that more file edits have been done
recently.

In the experiments we run a count query subject to range and duration
constraints.  Evaluating a count query rather than reporting the
matching tuples, allows to measure the time to retrieve the tuples,
rather than the time to print them to screen or write them to a file.

\subsection{Robustness of Index Structures Across Different Workloads}
\label{sec:overview-robust}

\begin{table*}
  \caption{ Performance attained by the tested indices on different
    workloads.  Performance is characterized in terms of three
    numbers: the number of queries per second (larger is better), the
    time to build the index in milliseconds (smaller is better), and
    the index size (smaller is better) expressed as the number of
    bytes used for each indexed interval.  Configurations with a dark
    blue background are the ones with the best throughput for a
    dataset/queryset combination, light blue cells denote the
    second-best data structure.  The fastest index construction times
    are underlined, and the smallest indices are highlighted in
    bold. 
    \label{tbl:overview-qps}}
    \centering
  {
    \renewcommand{\arraystretch}{1.5}
    %\setlength{\tabcolsep}{3pt}
    % \small
    \resizebox{\textwidth}{!}{%

    \begin{tabular}[t]{llrrrrrrr}
    \toprule
    \multicolumn{2}{c}{ } & \multicolumn{7}{c}{Queries per second $~|~$ Index build time $~|~$ Bytes per interval} \\
    \cmidrule(l{3pt}r{3pt}){3-9}
    dataset & query & \rdtd & \rddt & \gfile & \pindex & \rtree & \itree & \btree\\
    \midrule
     & \qro & \colorbox[HTML]{FFFFFF}{\color{black}415}$~|~$1\,024$~|~$\textbf{16.1} & \colorbox[HTML]{3572B9}{\color{white}602}$~|~$889$~|~$16.1 & \colorbox[HTML]{C0DBEC}{\color{black}468}$~|~$\underline{445}$~|~$27.7 & \colorbox[HTML]{FFFFFF}{\color{black}242}$~|~$2\,392$~|~$57.1 & \colorbox[HTML]{FFFFFF}{\color{black}12}$~|~$3\,358$~|~$80.0 & \colorbox[HTML]{FFFFFF}{\color{black}272}$~|~$3\,636$~|~$50.1 & \colorbox[HTML]{FFFFFF}{\color{black}8}$~|~$2\,505$~|~$31.4\\

     & \qdo & \colorbox[HTML]{C0DBEC}{\color{black}842}$~|~$1\,030$~|~$\textbf{16.1} & \colorbox[HTML]{FFFFFF}{\color{black}734}$~|~$\underline{785}$~|~$17.1 & \colorbox[HTML]{FFFFFF}{\color{black}77}$~|~$983$~|~$24.1 & \colorbox[HTML]{FFFFFF}{\color{black}69}$~|~$6\,139$~|~$178.1 & \colorbox[HTML]{FFFFFF}{\color{black}106}$~|~$3\,398$~|~$80.0 & \colorbox[HTML]{FFFFFF}{\color{black}12}$~|~$3\,636$~|~$50.1 & \colorbox[HTML]{3572B9}{\color{white}43\,103}$~|~$2\,469$~|~$31.4\\

    \multirow{-3}{*}{\raggedright\arraybackslash Synthetic} & \qrd & \colorbox[HTML]{C0DBEC}{\color{black}11\,737}$~|~$1\,052$~|~$\textbf{16.1} & \colorbox[HTML]{3572B9}{\color{white}14\,085}$~|~$\underline{829}$~|~$16.5 & \colorbox[HTML]{FFFFFF}{\color{black}1\,581}$~|~$857$~|~$24.1 & \colorbox[HTML]{FFFFFF}{\color{black}1\,403}$~|~$6\,566$~|~$178.1 & \colorbox[HTML]{FFFFFF}{\color{black}2\,591}$~|~$3\,402$~|~$80.0 & \colorbox[HTML]{FFFFFF}{\color{black}215}$~|~$3\,674$~|~$50.1 & \colorbox[HTML]{FFFFFF}{\color{black}502}$~|~$2\,529$~|~$31.4\\
    \cmidrule{1-9}
     & \qro & \colorbox[HTML]{C0DBEC}{\color{black}54\,945}$~|~$42$~|~$17.1 & \colorbox[HTML]{3572B9}{\color{white}57\,471}$~|~$44$~|~$\textbf{16.1} & \colorbox[HTML]{FFFFFF}{\color{black}49\,505}$~|~$\underline{20}$~|~$23.2 & \colorbox[HTML]{FFFFFF}{\color{black}15\,015}$~|~$80$~|~$40.0 & \colorbox[HTML]{FFFFFF}{\color{black}4\,921}$~|~$195$~|~$97.9 & \colorbox[HTML]{FFFFFF}{\color{black}51\,813}$~|~$139$~|~$44.5 & \colorbox[HTML]{FFFFFF}{\color{black}2\,798}$~|~$39$~|~$23.3\\

     & \qdo & \colorbox[HTML]{FFFFFF}{\color{black}4\,182}$~|~$44$~|~$\textbf{16.1} & \colorbox[HTML]{C0DBEC}{\color{black}4\,218}$~|~$33$~|~$21.3 & \colorbox[HTML]{FFFFFF}{\color{black}3\,791}$~|~$\underline{20}$~|~$23.2 & \colorbox[HTML]{FFFFFF}{\color{black}538}$~|~$111$~|~$62.9 & \colorbox[HTML]{FFFFFF}{\color{black}378}$~|~$205$~|~$97.9 & \colorbox[HTML]{FFFFFF}{\color{black}683}$~|~$150$~|~$44.5 & \colorbox[HTML]{3572B9}{\color{white}714\,286}$~|~$43$~|~$23.3\\

    \multirow{-3}{*}{\raggedright\arraybackslash Flight} & \qrd & \colorbox[HTML]{3572B9}{\color{white}232\,558}$~|~$45$~|~$16.7 & \colorbox[HTML]{C0DBEC}{\color{black}208\,333}$~|~$39$~|~$\textbf{16.6} & \colorbox[HTML]{FFFFFF}{\color{black}188\,679}$~|~$\underline{20}$~|~$23.2 & \colorbox[HTML]{FFFFFF}{\color{black}29\,940}$~|~$95$~|~$62.9 & \colorbox[HTML]{FFFFFF}{\color{black}21\,786}$~|~$193$~|~$97.9 & \colorbox[HTML]{FFFFFF}{\color{black}43\,860}$~|~$139$~|~$44.5 & \colorbox[HTML]{FFFFFF}{\color{black}13\,514}$~|~$47$~|~$23.3\\
    \cmidrule{1-9}
     & \qro & \colorbox[HTML]{FFFFFF}{\color{black}416}$~|~$116$~|~$\textbf{16.1} & \colorbox[HTML]{3572B9}{\color{white}467}$~|~$115$~|~$16.1 & \colorbox[HTML]{C0DBEC}{\color{black}445}$~|~$\underline{56}$~|~$23.0 & \colorbox[HTML]{FFFFFF}{\color{black}83}$~|~$540$~|~$119.0 & \colorbox[HTML]{FFFFFF}{\color{black}30}$~|~$468$~|~$97.7 & \colorbox[HTML]{FFFFFF}{\color{black}384}$~|~$274$~|~$48.3 & \colorbox[HTML]{FFFFFF}{\color{black}47}$~|~$252$~|~$36.4\\

     & \qdo & \colorbox[HTML]{C0DBEC}{\color{black}2\,544}$~|~$112$~|~$\textbf{16.1} & \colorbox[HTML]{FFFFFF}{\color{black}2\,520}$~|~$104$~|~$16.9 & \colorbox[HTML]{FFFFFF}{\color{black}1\,938}$~|~$\underline{52}$~|~$23.0 & \colorbox[HTML]{FFFFFF}{\color{black}57}$~|~$559$~|~$124.4 & \colorbox[HTML]{FFFFFF}{\color{black}234}$~|~$469$~|~$97.7 & \colorbox[HTML]{FFFFFF}{\color{black}157}$~|~$274$~|~$48.3 & \colorbox[HTML]{3572B9}{\color{white}3\,393}$~|~$255$~|~$36.4\\

    \multirow{-3}{*}{\raggedright\arraybackslash Webkit} & \qrd & \colorbox[HTML]{3572B9}{\color{white}3\,159}$~|~$112$~|~$\textbf{16.1} & \colorbox[HTML]{C0DBEC}{\color{black}3\,133}$~|~$99$~|~$16.9 & \colorbox[HTML]{FFFFFF}{\color{black}2\,322}$~|~$\underline{51}$~|~$23.0 & \colorbox[HTML]{FFFFFF}{\color{black}80}$~|~$737$~|~$322.6 & \colorbox[HTML]{FFFFFF}{\color{black}258}$~|~$468$~|~$97.7 & \colorbox[HTML]{FFFFFF}{\color{black}256}$~|~$274$~|~$48.3 & \colorbox[HTML]{FFFFFF}{\color{black}758}$~|~$252$~|~$36.4\\
    \cmidrule{1-9}
     & \qro & \colorbox[HTML]{FFFFFF}{\color{black}372}$~|~$240$~|~$\textbf{16.0} & \colorbox[HTML]{3572B9}{\color{white}391}$~|~$213$~|~$16.0 & \colorbox[HTML]{3572B9}{\color{white}391}$~|~$143$~|~$23.5 & \colorbox[HTML]{C0DBEC}{\color{black}381}$~|~$451$~|~$24.3 & \colorbox[HTML]{FFFFFF}{\color{black}38}$~|~$1\,112$~|~$81.2 & \colorbox[HTML]{FFFFFF}{\color{black}347}$~|~$717$~|~$46.6 & \colorbox[HTML]{FFFFFF}{\color{black}127}$~|~$\underline{138}$~|~$21.2\\

     & \qdo & \colorbox[HTML]{C0DBEC}{\color{black}3\,560}$~|~$275$~|~$\textbf{16.0} & \colorbox[HTML]{FFFFFF}{\color{black}2\,876}$~|~$202$~|~$17.5 & \colorbox[HTML]{FFFFFF}{\color{black}2\,075}$~|~$184$~|~$27.7 & \colorbox[HTML]{FFFFFF}{\color{black}520}$~|~$718$~|~$41.3 & \colorbox[HTML]{FFFFFF}{\color{black}452}$~|~$1\,175$~|~$81.2 & \colorbox[HTML]{FFFFFF}{\color{black}76}$~|~$733$~|~$46.6 & \colorbox[HTML]{3572B9}{\color{white}2\,500\,000}$~|~$\underline{141}$~|~$21.2\\

    \multirow{-3}{*}{\raggedright\arraybackslash MimicIII} & \qrd & \colorbox[HTML]{C0DBEC}{\color{black}10\,246}$~|~$241$~|~$\textbf{16.0} & \colorbox[HTML]{3572B9}{\color{white}10\,730}$~|~$201$~|~$17.5 & \colorbox[HTML]{FFFFFF}{\color{black}5\,227}$~|~$185$~|~$27.7 & \colorbox[HTML]{FFFFFF}{\color{black}1\,779}$~|~$720$~|~$41.3 & \colorbox[HTML]{FFFFFF}{\color{black}1\,384}$~|~$1\,081$~|~$81.2 & \colorbox[HTML]{FFFFFF}{\color{black}232}$~|~$721$~|~$46.6 & \colorbox[HTML]{FFFFFF}{\color{black}3\,347}$~|~$\underline{135}$~|~$21.2\\
    \bottomrule
    \end{tabular}
    }
  }
\end{table*}

In the first set of experiments, we consider both real-world and
synthetic datasets (with 10 million intervals).
Table~\ref{tbl:overview-qps} reports, under different combinations of
dataset/query workload, an overview on the performance of different
index structures on three indicators: the queries per second, the time
to build the index, and the size of the index.  The latter is measured
in terms of bytes per interval, i.e., the number of bytes that the
index uses for each input interval. Since we are representing
intervals as pairs of 64-bits unsigned integers, 16 bytes per interval
are required just to represent the data, and thus are a lower bound on
this performance metric.  
Dark blue
and light blue cells denote, respectively, the best and second-best
performing data structures in terms of queries per second.  For index
structures that take parameters, we report on the best configuration.
In particular, we defer the discussion of the effect of different
parameterizations of \ours to Section~\ref{sec:param-dep}.

We remark that the overall difference in throughput for different
workloads is due to the different output sizes: \qdo queries are in general
less selective than \qrd queries, hence it takes more time to iterate
through the output.  This explains why, in general, \qrd queries enjoy
a higher throughput across all the index structures.

Consider first the performance in terms of throughput, i.e., the first
measure.  We observe that \ours always performs better than
competitors on \qrd queries and on \qro queries (with the exception of
\dataset{MimicIII}, where it ties with \gfile).  For \qdo queries it
is always the second best solution after the \btree.  We will,
however, see in the next section how \ours surpasses the \btree as
soon as a few queries constraining also the range are introduced in
the workload.

We also observe that the \algo{grid-file} ranks second or third in
several cases.  Recall that this data structure is rather similar to
\ours, the difference being that the latter is adaptive to the input
distribution.  This shows the performance benefits of a data structure
that takes into account the data distribution.

Concerning the other performance indicators, we note that the index
construction time of \ours is comparable with the one of the \gfile
and \btree, and generally much faster than the other approaches.  As
for the size of the index, \ours always produces the smallest index,
across all tested configurations, using just slightly more than the
minimum 16 bytes to represent an interval.  The other approaches,
especially pointer-based data structures such as \btree, \itree, and
\rtree require significantly more space. The \pindex has a much higher
space requirement compared to the others, since each interval may be
replicated several times.

In summary, \ours is a data structure that provides fast query times,
is fast to build, and has negligible space overhead.

\subsection{Mixed Query Workloads}
\label{sec:mixed}

\begin{figure}
  \includegraphics[width=\columnwidth]{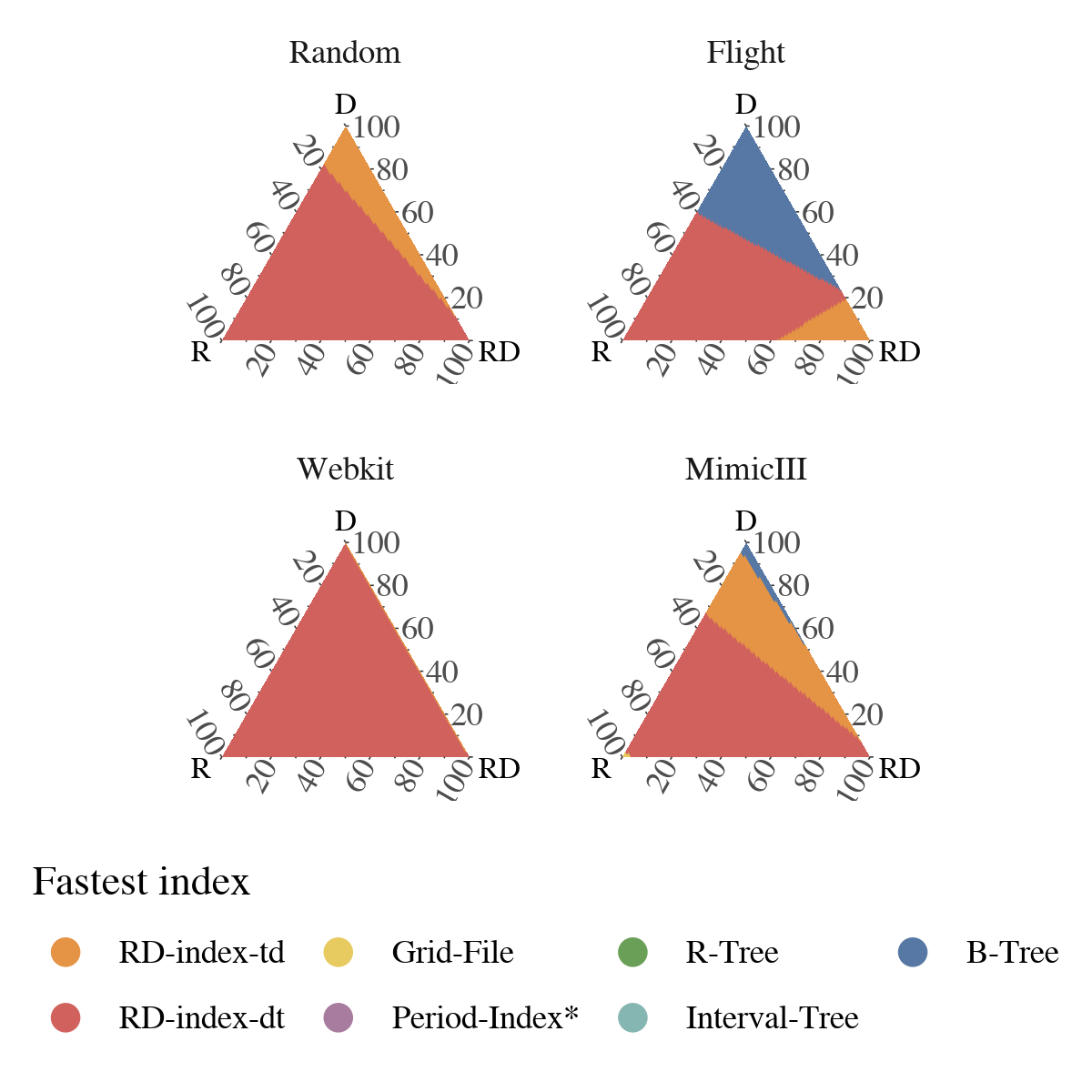}
  \caption{\label{fig:mixed} Ternary plots of best performing index
    structure for different mixed workloads, on the four different
    datasets.  The online supplementary material provides an
    interactive version of this plot.  }
\end{figure}

We now consider mixed query workloads, consisting of a mix of \qro,
\qdo, and \qrd queries.  Rather than fixing a particular combination
of queries, we use the data of Table~\ref{tbl:overview-qps} to
estimate the throughput of workloads composed by any combination of
queries.
Let $f_{rd}$, $f_{r}$, and
$f_{d}$ be, respectively, the fraction of \qrd, \qro, and \qdo queries
in the mixed query workload, with $f_{rd} + f_{r} + f_{d} = 1$.
Similarly, for a given algorithm and dataset, let $\phi_{rd}$,
$\phi_{r}$, and $\phi_{d}$ be the throughputs of \qrd, \qro, and \qdo
queries, as reported in Table~\ref{tbl:overview-qps}.
The throughput
of the combined workload will then be the weighted harmonic mean of the average rates reported in Table~\ref{tbl:overview-qps}
\begin{align*}
  \frac{1}{
	  \frac{f_{rd}}{\phi_{rd}} +
	  \frac{f_{d}}{\phi_{d}} +
	  \frac{f_{r}}{\phi_{r}}
  }.
\end{align*}

Figure~\ref{fig:mixed} provides a summary of the best performing
algorithm for any workload that can be concocted with the formula
above.  In each ternary plot, each point in the triangle identifies a
combination of \qro, \qdo, and \qrd queries.  For instance, the center
of each triangle corresponds to a workload composed in equal parts by
the three types of queries.  Portions of the triangles are colored
according to the best-performing index for the corresponding
workloads.
We observe that \rd is the best performing index structure in the vast
majority of workloads.  The exception is for workloads where \qdo
queries are the majority, where the \btree outperforms \rd.  On mixed
workloads, we observe that the performance of \rd is rather robust to
the ordering of indexing dimensions. This can be verified in
Figure~\ref{fig:tradeoff-all} in
Appendix~\ref{sec:additional-figures}, which reports ternary plots of
the throughput of all algorithms on all datasets for all mixed
workloads.
The online supplemental
material\footnote{\url{https://cecca.github.io/temporal-index/}}
provides an interactive tool to explore the performance of all
different index structures on any mixed workload.

\subsection{Distribution of Query Times Against Selectivity}
\label{sec:query-inspect}

\begin{figure*}
	\includegraphics[width=\textwidth]{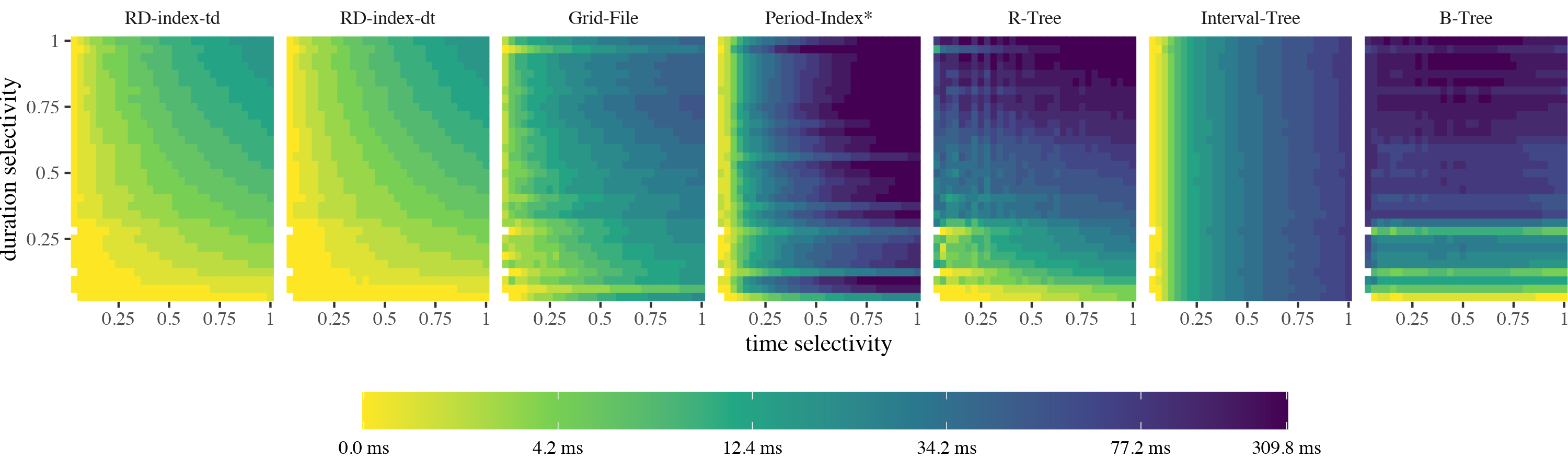}
	\includegraphics[width=\textwidth]{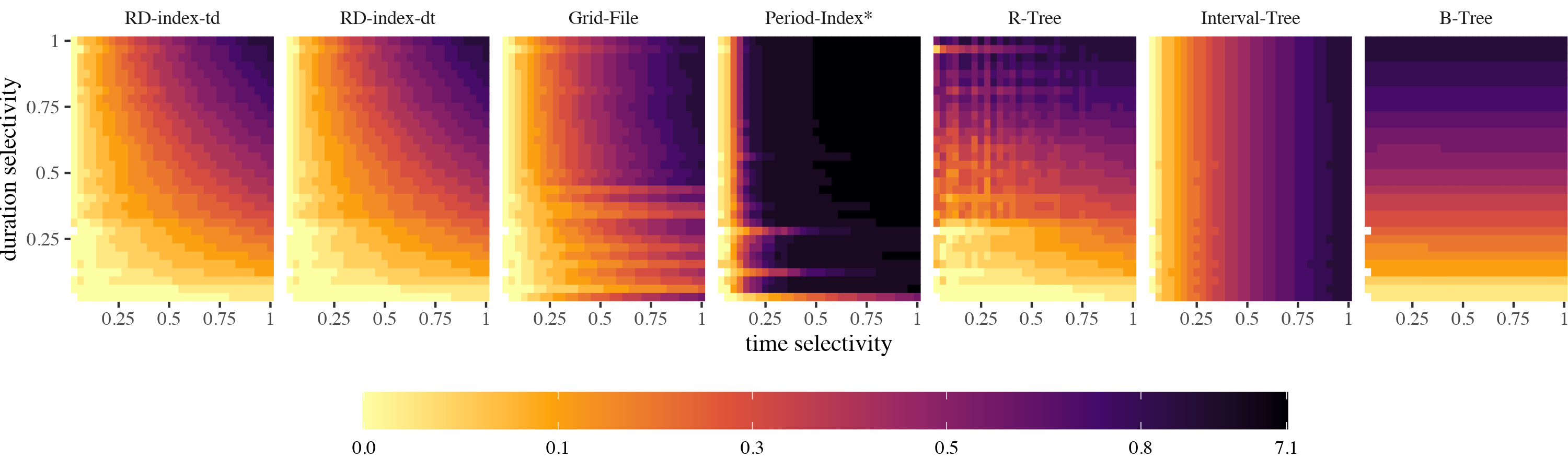}
	\caption{
		Heatmaps of the performance of index structures against selectivity
		in the time and duration dimensions.
		The top row reports the time, in milliseconds, required to answer a query with a given time and duration selectivity.
		The plots report the fraction of intervals (over the total $n=10^7$) examined by each query. A fraction larger than 1 means that
		a query examined the same interval more than once.
		For readability the color scales are binned every
		5 percentiles, therefore the scales are non linear.
		In both plots, lighter is better.
		\label{fig:selectivity-grid}}
	\Description{Time versus selectivity in two dimensions}
\end{figure*}

\begin{figure*}
	\includegraphics[width=\textwidth]{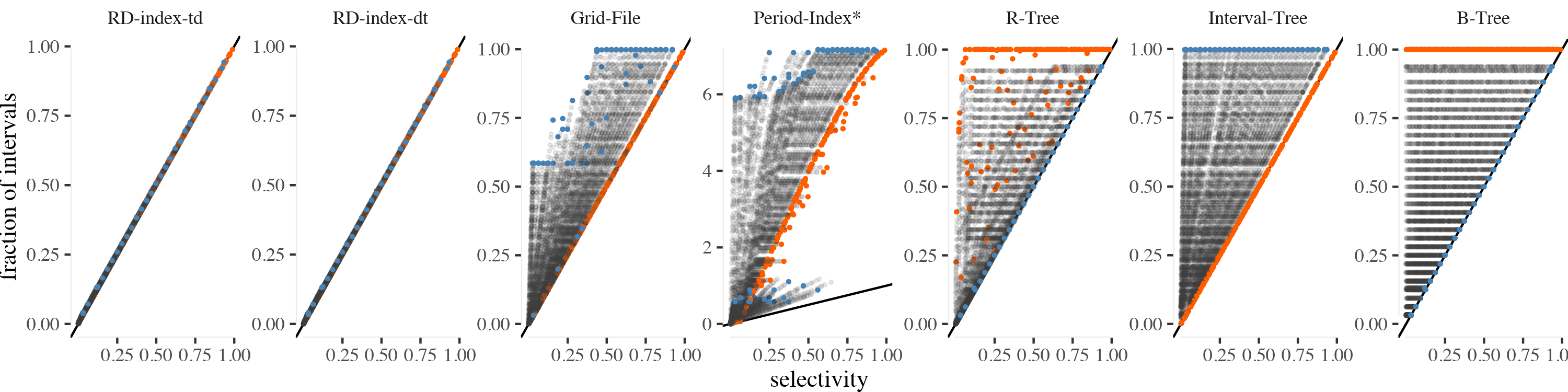}
	\caption{
		% Note that for the period-index* duration only queries are the worst, because they need
		% to visit all the buckets. On top of that, most of the queries have to examine
		% each interval more than once.
		Number of examined intervals against selectivity of the query.
		Each point represents a query:
		red dots {\color[HTML]{ff5e00}$\bullet$} are \qro queries, 
		blue dots {\color[HTML]{4682B4}$\bullet$} are \qdo queries,
		black dots $\bullet$ are \qrd queries.
		The black line represents the ideal behavior, in which only matching
		intervals are examined.
		Note that all the plots except for the one related to the 
		\pindex share the $y$ axis: in all plots the black diagonal has slope 1.
		\label{fig:time-vs-selectivity}}
	\Description{Time versus selectivity}
\end{figure*}

We investigate the relationship between the selectivity of queries
(i.e., the fraction of the input that satisfies them) and the time
taken by different data structures to answer them.  Given that
range-duration queries constrain both the time and the duration
ranges, for a query we can define the \emph{time selectivity} as the
fraction of the input that satisfies the time range constraint of the
query, and the \emph{duration selectivity} as the fraction of the
input satisfying the duration constraint of the query.

For a given dataset, we build a query set such that queries are
uniformly distributed on the \emph{time}\,$\times$\,\emph{duration}
selectivity plane: this way we have queries that are very selective in
only one dimension, very selective in both dimensions, or not
selective at all.  The goal is to investigate the behavior of each
index structure for queries with different characteristics.
To account for the overhead of measuring time and to level out the
effect of the CPU cache, we run each query 100 consecutive times and
report the average.
For \ours, we set the page size to $s=200$, which is a good parameter choice of all the datasets we consider.

Figure~\ref{fig:selectivity-grid} reports the results for such a
setup, with 1024 queries arranged in a 32\,$\times$\,32 grid, running
on a synthetic dataset with 10 million intervals, with uniformly
distributed start times and Zipf distributed durations.
Colors encode the fraction of
the dataset inspected by each query.  This metric allows a more
implementation-independent assessment of the relative performance of
different data structures.  When reading
Figure~\ref{fig:selectivity-grid}, remember that less selective
queries require more time just to iterate through the output.
Interestingly, different data structures exhibit different patterns in
this plot, as a result of how they access data.  

The \itree plot exhibits vertical bands. The data structure is able to
select intervals only based on their position on the timeline.
Therefore, for a fixed time selectivity of the query the fraction of
intervals inspected (and thus the time to answer the query) does not
depend on the selectivity in the duration dimension, since all the
candidate intervals need to be examined.  For similar reasons, the
\btree shows horizontal bands.  Data structures that explicitly index
both dimensions, instead, tend to exhibit a more \emph{diagonal}
pattern, in particular \ours{}, with a milder effect for \gfile and
\rtree.  The pattern exhibited by \pindex tends to be more similar to
the \itree.  This means that this index is more responsive to queries
that are selective in the time dimension.  This is to be expected
since \pindex is adaptive to the distribution of start times in the
dataset.  Its worse performance compared to \ours and \gfile for
queries of a given selectivity is explained by the fact that some
intervals might be represented multiple times in the index.

Figure~\ref{fig:time-vs-selectivity} reports the performance against
the \emph{overall} selectivity of the same queries.  The performance
is assessed in terms of the fraction of the input inspected by each
query, which are represented as dots whose position along the $x$ axis
encodes their overall selectivity.  Ideally, a data structure should
answer queries by inspecting just the tuples which are part of the
output.  This behavior is represented by the black diagonal in
Figure~\ref{fig:time-vs-selectivity}.  We observe that \ours indeed
shows the ideal behavior.
As for the \gfile, since the efficiency in answering a query depends
on the density of the cells being considered, the performance is in
many cases far from ideal. This can be seen from the fact that several
queries are far away from the ideal diagonal.
The \btree indexes intervals by their duration. As such, \qdo queries
are answered most efficiently: In Figure~\ref{fig:time-vs-selectivity}
such queries lie on the ideal diagonal.  On the other hand, \qro
queries are answered by simply enumerating the entire dataset, thus
scoring 1 on Figure~\ref{fig:time-vs-selectivity}.  Similar
considerations hold for the \itree, with \qro queries performing the
best and \qdo queries performing the worst.
Finally, \pindex inspects the same tuples multiple times for the
majority of queries, which are thus very far away from the
ideal diagonal line in the plot.

\begin{figure}
  \centering
  \includegraphics[width=0.8\columnwidth]{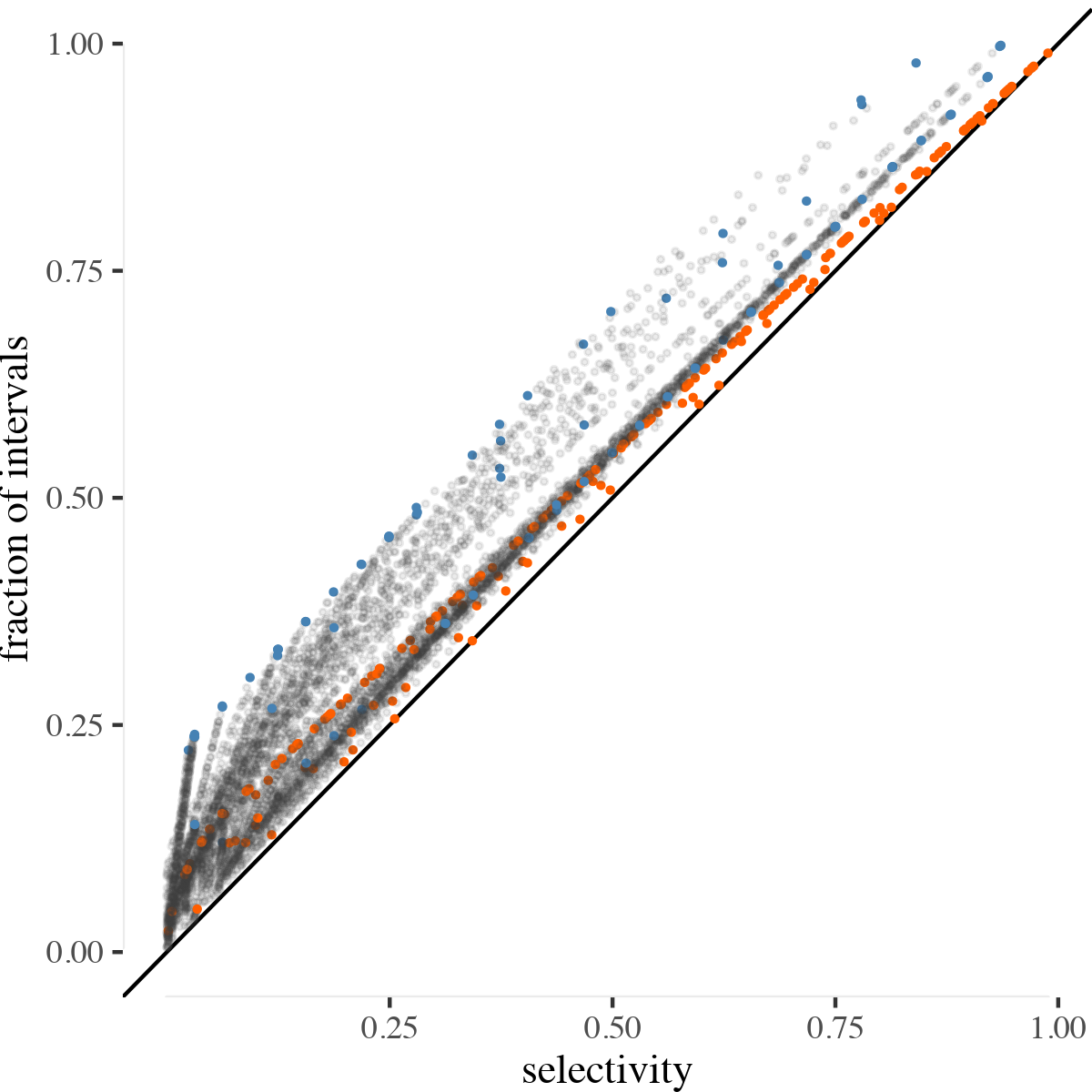}
  \caption{Performance of \ours{} under a non-optimal parameter configuration.
  \label{fig:inefficient-configuration}}
\end{figure}

So far, we have reported on the performance of a good configuration (i.e. $s=200$)  of
\ours{}, which gives a behavior close to optimal.
Figure~\ref{fig:inefficient-configuration} reports, 
in the same scenario
of Figure~\ref{fig:time-vs-selectivity}, the performance of \ours{}
with a time-duration ordering and page size $s=10$.  As we will see in the
next section, this configuration is not ideal (the page size is too small): indeed, for several
queries we have to visit more intervals than the precision of the
query.  This is especially true for duration-only queries (blue dots).
Nevertheless, even under this non-optimal setting, \ours still
performs better than the other approaches, i.e., points are more
concentrated towards the diagonal.

\subsection{Dependency on the Page Size and the Indexing Order}
\label{sec:param-dep}

The aim of this section is to investigate the influence of the page
size and the order in which the two dimensions are indexed.  We use
two datasets with 10 million intervals each: the first has uniform start times and
skewed durations (Zipf distribution), the second has skewed start
times and uniformly distributed durations.  As for the query workload,
batches of 10\,000 \qrd, \qro, and \qdo queries are considered.  The
page size is varied between 1 and 10\,000.

Due to the size of the dataset, any page size $s$ above
$\sqrt{n} \approx 3\,162$ results in a degenerate configuration, where
there is a single column (of size $s^2=n$) that contains all the
intervals. In this case the intervals are partitioned according to a
single dimension, with the number of cells controlled by the page size
parameter.  This situation might also occur at lower values of the
page size parameter, depending on the number of distinct values in the
dimension being partitioned.  On the other hand, for page size $1$,
each cell of the grid contains only intervals with the same start time
and duration, and it contains all of them.  Querying the data
structure in this case amounts to perform binary searches directly on
the values of the domains of start times and durations.

Figure~\ref{fig:param-dependency} reports the results of this
experiment in terms of queries per second. To ease the comparison
between the plots, we rescale the throughput by the highest value for
each combination of dataset and query workload.  Degenerate
configurations resulting in a single column are reported as triangles
rather than dots.
We observe very different trends for different query workloads.

Consider first the dataset with skewed durations.  For \qrd queries,
both orderings of dimensions exhibit a similar behavior. The peak
performance is reached by intermediate values around the page size.
If we consider \qdo queries the profile changes. Indexing first by
duration (blue line) slightly favors smaller page sizes, which imply
smaller columns and thus a more fine grained access to the data.
Indexing first by time, instead, requires a \qdo query to traverse all
the columns.  In this scenario high page sizes are favored, since they
translate into fewer columns to be iterated through.  For \qro queries
we observe a symmetric pattern.

When data has skew on the start times, the patterns exhibited by the
two indexing orders are rather different.  First, we note that the gap
between the best configurations of the two indexing orders for \qrd queries is much
wider than with the other dataset.  Second, while
indexing first by duration exhibits a similar pattern on both
datasets, indexing first by start time performs best in the degenerate
cases of a single column, i.e., with no partitioning of the start
times at all.  This is a consequence of
Observation~\ref{obs:dimension-order}. In particular, having skewed
start times exacerbates the difference in selectivity between the
range and duration constraints: when start times are very concentrated
on the timeline, the range constraint of a query is satisfied by
either most of the intervals or by almost none.

Overall, we observe that indexing first by duration has either better
or comparable performance compared to indexing first by time.
Therefore, we recommend to choose the former ordering of dimensions
when building an \ours.

\begin{figure}
  \includegraphics[width=\columnwidth]{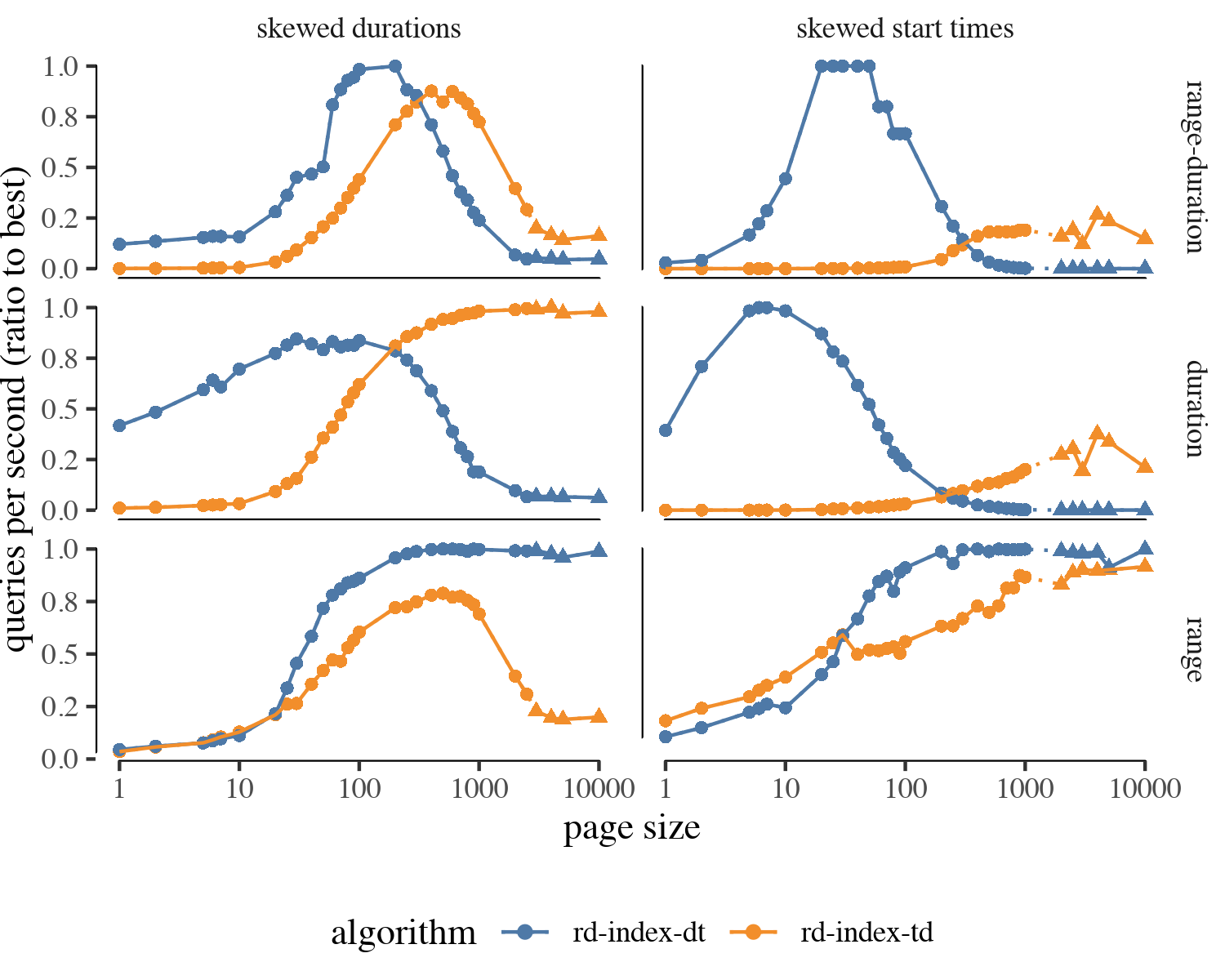}
  \vspace*{-2em}
  \caption{Dependency of the performance, in terms of queries per second, on the
    page size parameter.
    \label{fig:param-dependency}}
  \Description{}
\end{figure}

\subsection{Scalability with Respect to the Input Size}
\label{sec:scalability}

To test the scalability of the index structures, we consider \qrd
queries and datasets of increasing size, while maintaining the output
size constant.  This allows to assess the influence of the input size
on the performance without conflating the results with the time
required to iterate over larger outputs.

We consider the three real-world datasets, along with a synthetic one
with uniform start times and Zipf-distributed durations, which is
representative of the distribution of many real-world datasets.  Then,
we artificially increase their size as follows.  Given a dataset and a
scale parameter $\eta$, let $\bar{t}$ be the span of time covered by
all the intervals in the dataset.  We make $\eta$ copies of each
interval and shift copy $i$ in time by $i\cdot \bar{t}$, for
$i\in[0, \eta)$.  The underlying idea
is to repeat the temporal patterns of the dataset on a longer time
scale, simulating the scenario in which the relation grows over time.

Figure~\ref{fig:scalability} reports for each scale factor the
performance of the best configuration of each algorithm.  First, we
note that in general the relative performance of the data structures
does not change at different dataset scales.
There are some notable exceptions. The performance of \btree degrades
by a factor $\approx$10 from scale 1 to scale 10. The reason is that
the \btree indexes the durations, and under our synthetic construction
the number of intervals associated to each duration increases by the
same scale of the dataset.  For similar reasons, the performance of
\rddt degrades, albeit in a much less pronounced way.

On Webkit, the performance of \gfile and \pindex increases with the
scale as the dataset: the effect of our synthetic construction in this
case is to compensate for the skew in the start times, giving to both
data structures the chance of better partitioning the time dimension.

\begin{figure}%[t]
  \includegraphics[width=\columnwidth]{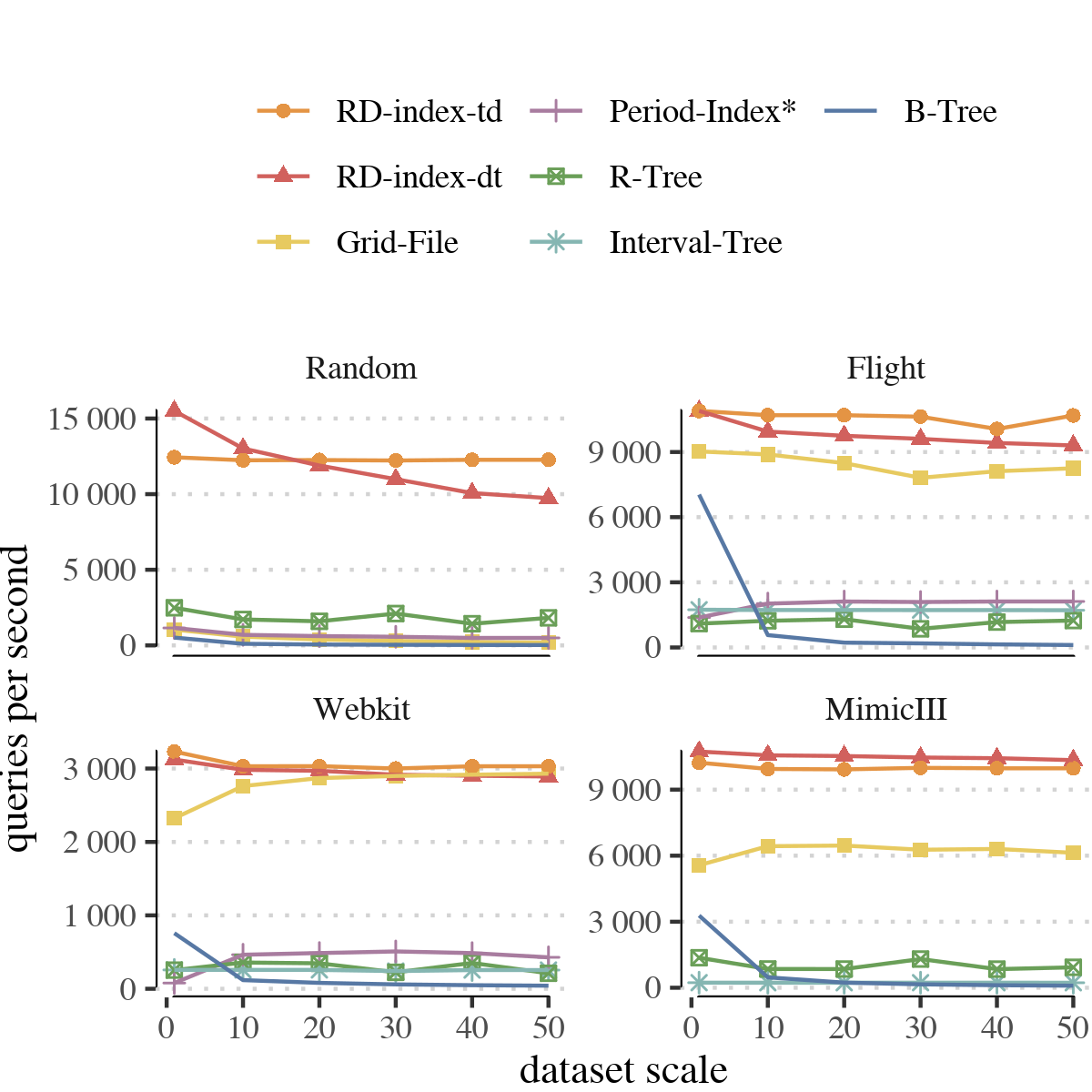}
  \caption{Scalability of the index structures for the increasing dataset sizes.
    Performance is measured in queries per second.
    Each point corresponds to the best configuration of the given data structure for a particular dataset scale, up to 50 times the original size.
    The workload is of range-duration queries. 
    \label{fig:scalability}} \Description{Scalability figure, showing
    that our proposed approach has superior performance for all tested
    dataset sizes, and scales gracefully.}
\end{figure}

\subsection{Insertion Performance}
\label{sec:insertion}

\begin{figure*}
  \begin{center}
    \includegraphics[width=\textwidth]{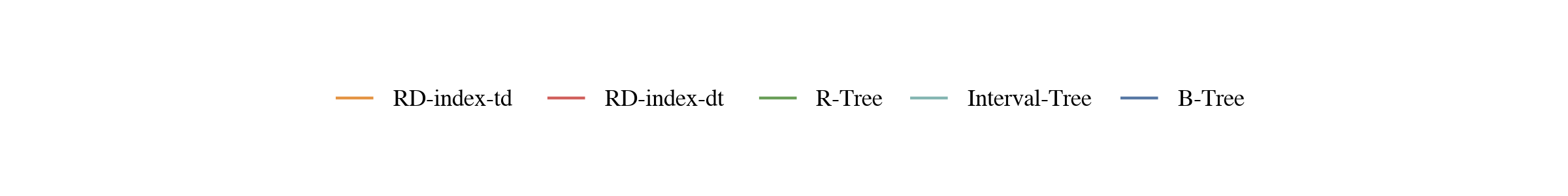}
  \end{center}
  \begin{subfigure}{0.49\textwidth}
    \includegraphics[width=\columnwidth]{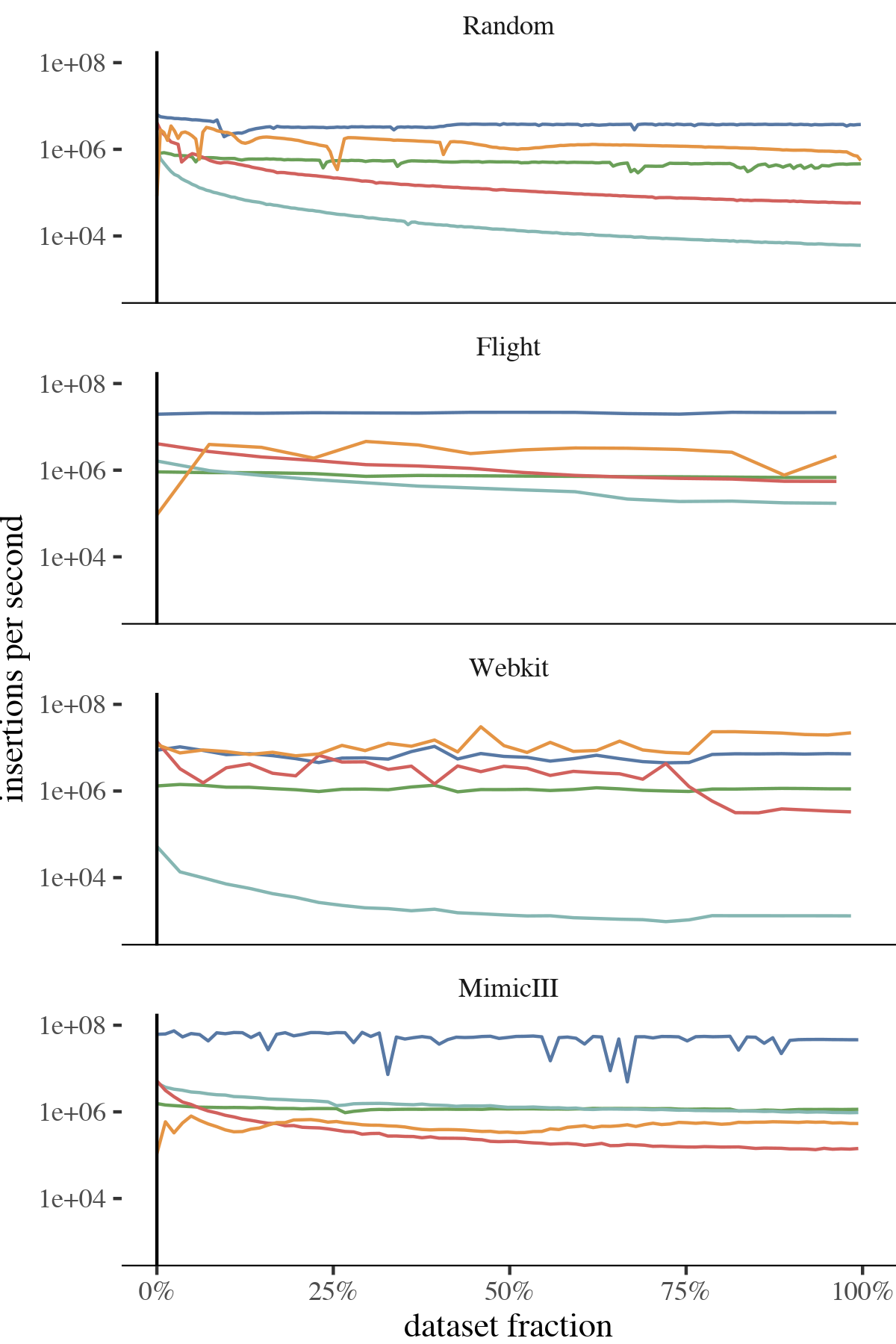}
    \subcaption{Intervals inserted in random order}
  \end{subfigure}
  \begin{subfigure}{0.49\textwidth}
    \includegraphics[width=\columnwidth]{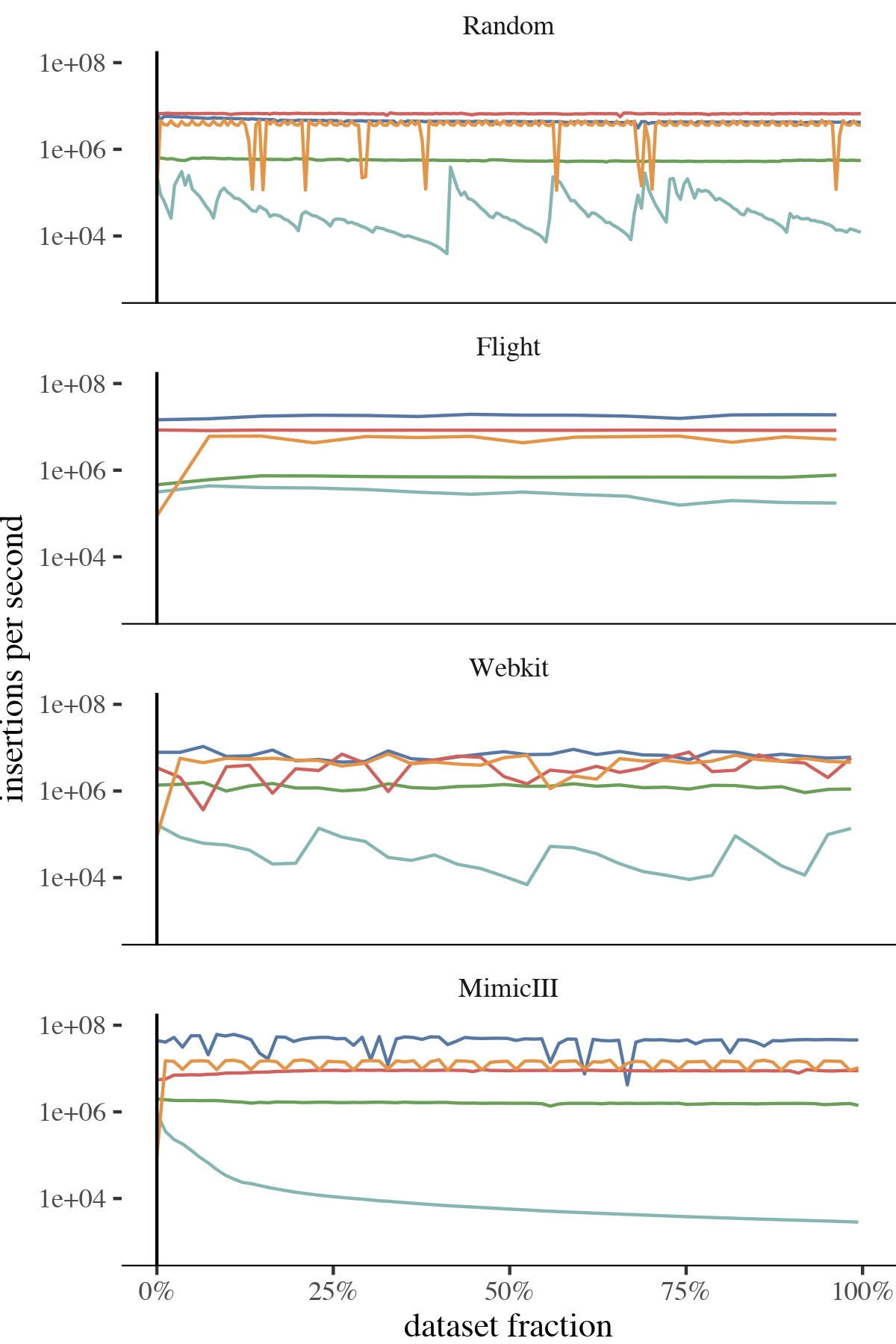}
    \subcaption{Intervals inserted by increasing start time}
  \end{subfigure}
  \caption{\label{fig:insertion}Performance of the insertion for
    different index structures.  The left group of plots (a) reports the
    throughput of insertions when intervals are added in random order.
    The right group of plots (b) focuses on insertions by increasing
    start time, i.e.\ a \emph{append only} scenario.}
\end{figure*}

We now focus on the insertion operation, considering \ours, \btree,
\rtree, and \itree.  We omit from the comparison \pindex, which does
not support updates, and \gfile, which is a static data structure that
requires to know the range of the data beforehand.

For each of the four datasets we considered in the previous sections,
we insert intervals into initially empty indices.  The expectation is
that the insertion performance degrades as the index grows larger. To
measure this effect, we insert the intervals in batches of 50\,000,
measuring the time for each batch in order to be able to estimate the
throughput of the insertions as the size of the index grows.  We
perform two sets of experiments.  In the first set of experiments, the
intervals are inserted in random order.  In the second set of
experiments, intervals are inserted by increasing start time, which
simulates a natural \emph{append only} scenario for time-related data.

Figure~\ref{fig:insertion} reports the results of these experiments.
The x-axis reports the fraction of the dataset that has been inserted
into the index.  The y-axis reports, in logarithmic scale, the number
of insertions per second.
In most cases, the best performing data structure for the insertion
workload is the \btree, both when data is presented in random and
sorted order.  \ours follows on the second place for most datasets,
with the ordering first by time and then by duration usually
performing better.  For randomly-ordered insertions, we note that the
performance of \ours tends to slightly decrease as the index size
increases.  In the more realistic append only scenario, instead, the
insertion throughput of \ours is more stable and tends to remain
constant over time.  This is expected, since in such a setting only
the last column (when the start time is the first dimension being
indexed, otherwise the last cell of each column) is ever restructured,
requiring very little data to be moved.  Furthermore, the performance
in this scenario improves compared to the random order of insertions,
in particular on the \dataset{MimicIII} dataset, and is on par with
the \btree on all datasets.

\section{Conclusions}
\label{sec:conclusions}

\ours is an index data structure for temporal intervals that allows to answer
efficiently range-duration queries.  Our approach, which has provable
theoretical guarantees, lends itself to a simple and efficient implementation.
In particular, its ability to adapt to the distribution of the input data makes 
it compare favorably with the state of the art on a variety of workloads.
In particular, \ours has superior performance on a vast array of mixed workloads.

A direction of future work is to extend the \ours to support interval
joins~\cite{DBLP:journals/vldb/PiaDHP21}, thus addressing several needs with a
single index.
Furthermore, the favorable comparison with the \rtree suggests that a
promising research direction is to extend the ideas on which \ours is
based to the case of multidimensional spatial data.

\bibliographystyle{abbrv}
\bibliography{references}

\appendix
\section{Additional Figures}
\label{sec:additional-figures}

In this appendix we report on the complete results for mixed workloads of which Figure~\ref{fig:mixed} in Section~\ref{sec:mixed} is a compact summary.

Figure~\ref{fig:tradeoff-all} reports a collection of ternary plots, each focusing on a particular combination of dataset (columns) and index structure (rows).
In order to accommodate on a readable color scale the wide range of throughputs encompassed by indices, we color-code the throughput divided by percentiles.
Therefore, brighter colors correspond to high throughputs, dark colors correspond to low throughputs.
Furthermore, these throughput percentiles are computed \emph{per dataset}.

Therefore, we can read Figure~\ref{fig:tradeoff-all} as follows.
For a given dataset (say \texttt{Random}) we can take two different algorithms (e.g. \ours and \btree) and compare the color in any given area of the two corresponding ternary plots.
For instance focusing on the bottom-right corner of the ternary plots, corresponding to workloads comprised mostly of \qrd queries, we observe that the plot for \ours features a brighter color than the one for \btree.
This means that the throughput of \ours is higher than the one of \btree for those workloads.

From Figure~\ref{fig:tradeoff-all} we can therefore notice that the behavior of \rddt and \rdtd is very similar, across all datasets.
Furthermore, we notice that with the exception of workloads where the vast majority of queries are \qdo, \ours enjoys a higher throughput than other indices.

\begin{figure*}
  \includegraphics[width=\textwidth]{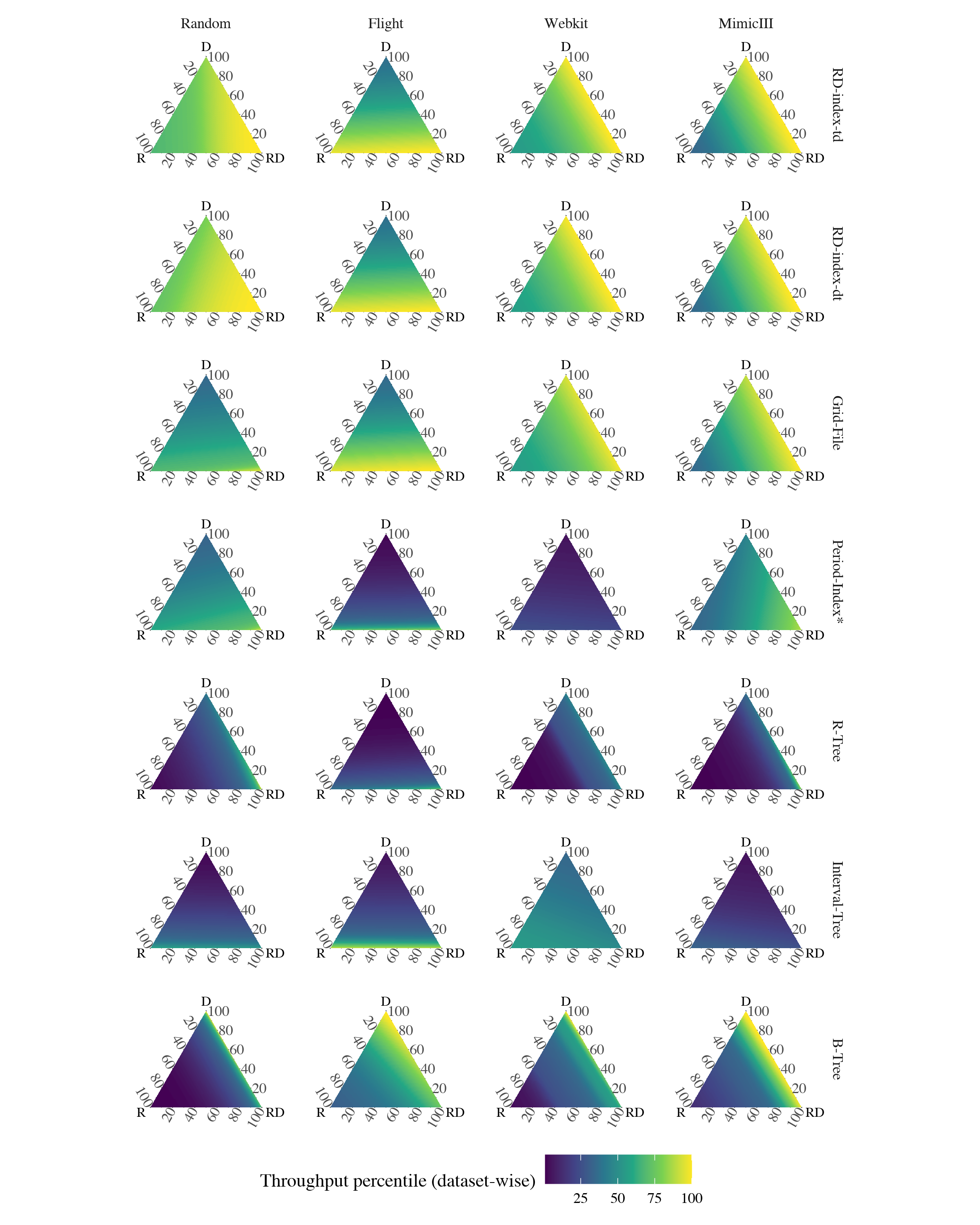}
  \caption{\label{fig:tradeoff-all}Performance of all index structures on all considered datasets, for different mixed workloads.}
\end{figure*}

\end{document}